\newtheorem{theorem}{Theorem}
\newtheorem{lemma}{Lemma}
\newtheorem{corollary}{Corollary}
\newtheorem{definition}{Definition}
\newtheorem{example}{Example}
\newcommand {\bq} {\mbox{\boldmath $q$}}
\newcommand {\bTheta} {\mbox{\boldmath $\Theta$}}
\newcommand {\bx} {\mbox{\boldmath $x$}}
\newcommand {\by} {\mbox{\boldmath $y$}}
\newcommand {\bP} {\mbox{\boldmath $P$}}
\newcommand {\bS} {\mbox{\boldmath $S$}}
\newcommand {\bW} {\mbox{\boldmath $W$}}
\newcommand {\bX} {\mbox{\boldmath $X$}}
\newcommand {\bY} {\mbox{\boldmath $Y$}}
\newcommand{\calC}{{\cal C}}
\newcommand{\calD}{{\cal D}}
\newcommand{\calE}{{\cal E}}
\newcommand{\calG}{{\cal G}}
\newcommand{\calL}{{\cal L}}
\newcommand{\calP}{{\cal P}}
\newcommand{\calS}{{\cal S}}
\newcommand{\calX}{{\cal X}}
\newcommand{\calY}{{\cal Y}}
\begin{document}
\title{Ratio List Decoding}

\author{
  Anelia Somekh-Baruch\thanks{A.\ Somekh-Baruch is with the Faculty of Engineering at Bar-Ilan University, Ramat-Gan, Israel.  Email: somekha@biu.ac.il. 
  This work was supported by the Israel Science Foundation (ISF) under grant 631/17. 
  Some of the results of this paper were presented at the IEEE International Symposium on Information Theory (ISIT) 2015. This paper was accepted for publication at the IEEE Transactions on Information Theory.}
}
\maketitle

\begin{abstract}

We extend the notion of list decoding to {\em ratio list decoding} which involves a list decoder whose list size is specified as a function of the number of messages $M_n$ and the block length $n$. We present necessary and sufficient conditions on $M_n$ for the existence of code sequences which enable reliable list decoding with respect to the desired list size $L(M_n,n)$. It is shown that the ratio-capacity, defined as the supremum of achievable normalized logarithms of the ratio $r(M_n,n)=M_n/L(M_n,n)$ is equal to the Shannon channel capacity $C$, for both stochastic and deterministic encoding. Allowing for random list size, we are able to deduce some properties of identification codes, where the decoder's output can be viewed as a list of messages corresponding to decision regions that include the channel output. 

We further address the regime of mismatched list decoding, in which the list constitutes of the codewords that accumulate the highest score values (jointly with the channel output) according to some given function. We study the case of deterministic encoding and mismatched ratio list decoding. We establish similar necessary and sufficient conditions for the existence of code sequences which enable reliable mismatched list decoding with respect to the desired list size $L(M_n,n)$, and we show that the ratio-capacity with mismatched decoding is equal to the mismatch capacity. Focusing on the case of an exponential list size $L_n=e^{n \Theta}$, its comparison with ordinary mismatched decoding shows that the increase in capacity is by $\Theta$ bits per channel use for all channels and decoding metrics. Several properties of the average error probability in the setup of mismatched list decoding with deterministic list size are provided.  

\end{abstract}

\newpage

\section{Introduction}
Unlike ordinary decoding where the decoder has to select a single message as its estimate, a list decoder outputs a list of messages, among which the transmitted one is expected to be found. Therefore, an error occurs if the actual transmitted message is not in the list. 
List decoding was introduced by Elias \cite{Elias1957} and by Wozencraft \cite{Wozencraft58}, and has been studied extensively
for linear codes and other specific code structures
(see \cite{sudan2000list,ChunlongBaiMielczarekKrzymienFair2007,Guruswami2001,Guruswami_FnT_2007} and references therein), and also from the information theoretic point of view (see
\cite{ShannonGallagerBerlekamp1967,Forney1968Erasure,HofSasonShamai_IT2010,Merhav_List_Decoding_IT_2014} and references therein). The notion of a decoder that outputs a list of possible messages arises naturally in many channel coding settings, either for applications that do not require full decoding of the transmitted message, or when full decoding is impossible. 
For example, the relay channel can be regarded as a case in which list decoding is used, even though the ultimate goal of the receiver is to obtain a single message and not a list of messages \cite{CoverElGamalRelay1979}. 

In certain applications, it is possible to pick the correct codeword from the list with the help of side information or the semantic context. In some cases, such as concatenated code constructions, the entire list is more advantageous than just having the most probable codeword \cite{Guruswami_FnT_2007,TalVardy2015}.

In this paper we extend list decoding  
to a notion that we call ratio list decoding. 
Whereas classical list decoding has to do with a list size which is predetermined as a function of the block length $n$, in ratio list decoding, the list size is a function of both $n$ and the size of the message set $M_n$.
As in classical list decoding, the requirement from the decoder is that with probability converging to one, the transmitted codeword must belong to the declared list whose size is a function of $(n,M_n)$.

For simplicity of presentation, we consider a desired list size of $L(M_n,n)=\frac{M_n}{r(M_n,n)}$ where $r(M_n,n)$ is referred to as the ratio function
(ratio of codebook size to list size). 
We allow the actual size of the list to be a random variable with a vanishingly small probability  to exceed $L(M_n,n)$. 
Note that the following three cases fall within the scope of our model: \begin{itemize}
\item The case of $r(M_n,n)=M_n$, ($L(M_n,n)=1$) which corresponds to classical channel coding. 
\item The cases in which $L(M_n,n)$ is a function of the block length, such as $L(M_n,n)=\exp(n\Theta)$; i.e., $r(M_n,n)=M_n\cdot \exp(-n\Theta)$ have been studied extensively in the literature. 
\item The special case of $r(M_n,n)=1$, i.e., $L(M_n,n)=M_n$, where clearly the codebook size can be infinite, since the list of messages $\{1,...,M_n\}$ is exhaustive. 
\end{itemize}

\vspace{0.1cm}

The new setup of ratio list decoding introduces a generalized theoretical perspective to list decoding, by adding the dimension of the proportion of the list size compared to the number of messages. 
It is motivated by applications such as concatenated code constructions, in which the proportion of the list relatively to the entire message set is of main interest.

We show that under stochastic as well as deterministic encoding, the supremum of the achievable normalized logarithm of the codebook to the list size ratios is equal to the Shannon channel capacity $C$. 
Furthermore, we show that if the number of messages as a function of the block length $M_n$ is such that $\limsup_{n\rightarrow\infty}\frac{1}{n}\log r(M_n,n)>C$, reliable list decoding cannot occur, and if $0< \limsup_{n\rightarrow\infty}\frac{1}{n}\log r(M_n,n)<C$ then there exists a sequence of codes having $M_n$ messages for blocklength $n$ which enables reliable list decoding w.r.t.\ (with respect to) the ratio function $r(M_n,n)$. 
We first prove these results for information-stable channels, and subsequently we show that they continue to hold for general channels in the Verd\'{u}-Han sense \cite{VerduHan1994}.

As a corollary of our results we deduce some properties of identification codes. 
Identification codes \cite{AhlswedeDueck1989} are about a decoder that needs to answer reliably to the $M_n$ binary hypothesis testing questions ``was message $i$ transmitted?'' for $i=1,...,M_n$. 
Although this is not a list decoding setup, the decoder's output can be viewed as a list of messages whose decision regions include the channel output $Y^n$. 
This interpretation enables us to derive a non-vanishing lower bound on the probability that the list size of the identification decoder with $M_n>\exp\left\{e^{n(C-\epsilon)}\right\}$ will exceed $\exp\left\{e^{n(C-\epsilon)}\right\}/e^{n(C-\epsilon-\delta)}$ for $\delta>\epsilon>0$.

The remainder of this paper focuses on the important special case in which the structure of the decoder is predetermined, such as for instance the Hamming distance, and cannot be optimized with respect to the actual channel over which transmission occurs. We refer to this setup as mismatched list decoding. 
This setup extends classical mismatched decoding to the framework of list decoding. In classical mismatched decoding, there is a real-valued function $q_n$, usually referred to as a ``metric'', which maps each pair of channel input and output sequences $(x^n,y^n)$ to a real number $q_n(x^n,y^n)$. The decoder chooses the message $\hat{m}=\mbox{argmax}_{i\in \{1,...,M_n\}}q_n(x^n(i),y^n)$ as its output. It is assumed that in selecting its codebook, the encoder is aware of the structure of the decoder. 
Mismatched decoding for the Discrete Memory Channel (DMC) with an additive metric $q_n(x^n,y^n)=\sum_{i=1}^nq(x_i,y_i)$ was studied by Csisz\'{a}r and K{\"o}rner \cite{CsiszarKorner81graph} and by Hui \cite{Hui83} who presented a formula for the rate achievable by random coding. It turns out that higher rates can be achieved by more complex random coding methods \cite{CsiszarNarayan95,Lapidoth96} such as superposition coding \cite{ScarlettMartinezGuilleniFabregasISIT_2013,ScarlettMartinezGuilleniFabregas_mismatch_2014_IT,
ScarlettMartinezGuilleniFabregas2012AllertonSU,
ScarlettPengMerhavMartinezGuilleniFabregas_mismatch_2014_IT,SomekhBaruch_mismatchachievableIT2014,
SomekhBaruchISIT_2013}. Achievable error exponents were studied in \cite{CsiszarKorner81graph,ScarlettPengMerhavMartinezGuilleniFabregas_mismatch_2014_IT,SomekhBaruch_mismatchachievableIT2014}. Nevertheless, a single letter formula for the mismatch capacity of the DMC has not been established. Multiletter upper bounds are obtained in \cite{SomekhBaruchMismatchedDMCIT2018_accepted} and a general multi-letter expression for the mismatch capacity was derived in \cite{SomekhBaruch_general_formula_IT2015}. For other related works, see  
\cite{MerhavKaplanLapidothShamai94,Lapidoth96b,
GantiLapidothTelatar2000,ShamaiSason2002} and references therein.

In our mismatched list decoding, 
the decoder's list is composed of the codewords which accumulate the highest metrics values jointly with the channel output, and in this setup  
we focus on deterministic encoding. The main reason for focusing on deterministic encoding, apart of the simplicity of the analysis, is that it is not clear how one should define a fixed decoding rule when there are multiple options for transmitted signal $x^n$ which occur with positive probability given a particular message $m$. There are several possibilities to approach this issue such as considering the decoding rules:
 $\hat{m}=\mbox{argmax}_m \max_{x^n:\; \Pr(x^n|S=m)>0}q_n(x^n,y^n)$ or 
$\hat{m}=\mbox{argmax}_m \sum_{x^n} \Pr(x^n|S=m)\cdot q_n(x^n,y^n)$ (where $S$ stands for the random message) but the analysis thereof becomes more involved. 

In the mismatched case too, we  
establish necessary and sufficient conditions 
for the existence of code sequences which enable reliable mismatched list decoding with respect to the desired list size $L(M,n)$. We show that in this mismatched setup, the supremum of achievable logarithm of normalized ratios $\frac{1}{n}\log r(M_n,n)$ is equal to the mismatch capacity.

Further, we specialize the results to the mismatched case of a list of size $L_n=e^{n\Theta_n}$ where $\Theta_n\in(0,c]$, in which case we derive a general multi-letter formula for the capacity. 
This is an extension of our previous results in \cite{SomekhBaruch_general_formula_IT2015}, where a general multi-letter formula was established for the mismatch capacity of a general channel, defined as a sequence of conditional distributions with a general decoding metric sequence.
It is shown that the increase in capacity for the $e^{n\Theta}$ list size case (compared to ordinary mismatched decoding) is $\Theta$ bits per channel use. 

An expression for the average error probability in list decoding with {\it a constant} list size $L_n=e^{n\Theta_n}$ (that is, equal list size for all $y^n$) and rate $R$, denoted ${\cal E}_{q_n}^{(n)}(R,\Theta_n)$, where $\Theta_n\leq R$, is established. 
We further present a random coding lower bound on ${\cal E}_{q_n}^{(n)}(R,\Theta_n)$ which is based on the analysis of \cite{Merhav_List_Decoding_IT_2014}.
Finally, we derive an inequality that can be regarded as an extension (to the case of mismatched list decoding) of the inequality resulting from Fano's inequality for matched channel coding for the DMC; i.e., 
$
P_e^{(n)}\geq 1-\frac{C}{R}-\frac{1}{nR}$, where $C$ is the channel capacity, $R$ is the code rate and $P_e^{(n)}$ is the average probability of error obtained by a code of rate $R$. 
In the case of the erasures-only decoding metric, this yields a lower bound on the average error probability above the capacity.

This paper is organized as follows. 
Section \ref{sc: Problem Formulation and Definitions} is devoted to definitions and the problem formulation. 
Section \ref{sc: Converse and Direct Results for Ratio List Decoding} presents converse and direct results for ratio list decoding. 
In 
Section \ref{sc: Mismatched List Decoding} we extend the results to the mismatched case, 
and describe 
properties of the average error probability in mismatched list decoding. 
Section \ref{sc: discussion} presents the discussion and concluding remarks. 

\section{Problem Formulation and Definitions}\label{sc: Problem Formulation and Definitions}

We consider a point-to-point communication channel with input alphabet $\calX$ and output alphabet $\calY$. 
We adopt the following definition of \cite{VerduHan1994} for a general channel. 
A channel $\bW=\{W^{(n)}\}_{n=1}^{\infty}$ is an arbitrary sequence of increasing dimension where $W^{(n)}$ is a conditional output
distribution from $\calX^n$ to $\calY^n$, and where $\calX$ and $\calY$ are the
input and output alphabets, respectively. With a little abuse of terminology we refer to $\bW$ as well as to $W^{(n)}$ as channels, where the exact meaning will be clear from the context. 
 
An encoder observes a random message $S$, which is distributed uniformly over $\{1,...,M_n\}$, and produces a channel input signal $X^n$ of length $n$ as a function of $S$.\footnote{In fact, it should be understood that $S$ depends on $n$, and should be denoted as $S_n$, but for simplicity of presentation we omit the dependence of $S_n$ on $n$ from the notation whenever possible.} The encoder is not constrained to using deterministic functions and therefore can be viewed as a collection of distributions $\{P_n(\cdot|m)\}$ from $\{1,...,M_n\}$ to $\calX^n$. 
The signal $X^n$ is fed into the channel $W^{(n)}$, which is a conditional distribution from $\calX^n$ to $\calY^n$, and the resulting channel output is denoted by $Y^n$.

A list decoder is defined by a collection of not necessarily disjoint decision regions
\begin{flalign}
\calD_m\subseteq \calY^n, \; m\in\{1,...,M_n\}.
\end{flalign}
Another representation of the decoder can be the list of decision regions which contain the channel output. In other words, we let $L_n(y^n)$ be the list at the output of the decoder, 
that is, 
\begin{flalign}
L_n(y^n)=& \left\{ m\in\{1,...,M_n\}:\; y^n\in \calD_m\right\}. 
\end{flalign}
A list decoder maps every $y^n\in\calY^n$ into a subset $L_n(y^n)\subseteq \{1,...,M_n\}$.
Note that we allow the list size to depend on $y^n$, and that
\begin{flalign}
y^n\in\calD_S \Leftrightarrow S\notin L_n(y^n).
\end{flalign}

We next state a number of definitions leading to the ratio-capacity. Let $\mathbb{N}$ stand for the set of all positive integers.
\begin{definition}
 We say that $r(M_n,n)$ is a ratio function if for all $(n,M_n)\in\mathbb{N}\times \mathbb{N}$ one has $1\leq r(M_n,n)\leq M_n$. 
 
\end{definition}

\begin{definition}\label{df: a code parameters}
An $(n,M_n,\epsilon,\zeta)$-code $\left\{\{P(\cdot|m)\}_{m\in \{1,...,M_n\}},\; \{L_n(y^n)\}_{y^n\in\calY^n} \right\}$ for the channel w.r.t.\ ratio function $r(M_n,n)$ is one for which
\begin{flalign}\label{eq: two probabilities}
&\Pr\left( S\notin L_n(Y^n)\right)= \epsilon, \mbox{ and}\nonumber\\
& \Pr\left(|L_n(Y^n)|>\frac{M_n}{r(M_n,n)}\right)= \zeta.
\end{flalign}
In words, $\epsilon$ is the probability that the transmitted message is not in the output list of the decoder, and $\zeta$ is the probability that the output list size is larger than permitted by the ratio function. The quantities $\epsilon$ and $\zeta$ will be referred to as error of the first and second kind, respectively. 
\end{definition}

\begin{definition}\label{df: achievable rho}
We say that a sequence $\{M_i\}_{i=1}^{\infty}$, $M_i\in\mathbb{N}$ is feasible for the channel w.r.t.\ ratio sequence $r(M_n,n)$ if
there exists a sequence of $(n,M_n,\epsilon_n,\zeta_n)$-codes
having vanishing probabilities of error of the first and second kinds, that is,  
$\lim_{n\rightarrow\infty} \epsilon_n=0$, and $ \lim_{n\rightarrow\infty} \zeta_n=0$.
\end{definition}
\begin{definition}
We say that $\rho$ is an achievable normalized log-ratio for the channel if there exist a ratio function $r(M_n,n)$ and a corresponding feasible sequence $M_1,M_2,....$, $M_i\in\mathbb{N}$ such that $\limsup_{n\rightarrow\infty}\frac{1}{n}\log r(M_n,n)= \rho$.
\end{definition}
\begin{definition}\label{df: maximal achievable rho}
The {\it ratio-capacity} of the channel $\bW$ is the supremum of achievable normalized log-ratios, and will be denoted $\rho_{sup}(\bW)$. 
\end{definition}

\section{Converse and Direct Results for Ratio List Decoding}\label{sc: Converse and Direct Results for Ratio List Decoding}

Our first result is a coding theorem for channels $\bW$ whose capacity is given by the formula
\begin{flalign}\label{eq: inf stab cond}
C= \limsup_{n\rightarrow\infty} \max_{P(X^n)} \frac{1}{n}I(X^n;Y^n).
\end{flalign}
Note that information-stable channels satisfy this condition (see \cite{VerduHan1994}). This enables us to use a Fano-type proof for the converse part, and to present some insights that are applicable to identification codes.
In Section \ref{sc: An Extension to General Channels} we extend the coding theorem to general channels in the spirit of \cite{VerduHan1994}.

\begin{theorem}\label{th: main theorem}
Let $\{M_i\}_{i=1}^{\infty}$, $M_i\in\mathbb{N}$, be a given sequence and let $r(M_n,n)$ be a ratio function. 
Let 
\begin{flalign}\label{eq: rho r dfn}
\rho_r=\limsup_{n\rightarrow\infty}\frac{1}{n}\log r(M_n,n). 
\end{flalign}
(i) Converse part: If the sequence $\{M_i\}_{i=1}^{\infty}$ is feasible for the channel w.r.t.\ the
 ratio function $r(M_n,n)$ and if the channel capacity $C$ satisfies (\ref{eq: inf stab cond})
then 
\begin{flalign}\label{eq: converse part ofTh1}
\rho_r\leq  C.
\end{flalign}
(ii) Direct part: If the sequence $\{M_i\}_{i=1}^{\infty}$ satisfies 
\begin{flalign}\label{eq: direct part ofTh1}
0<\rho_r<C,
\end{flalign}
where $C$ is the channel capacity, it is feasible for the channel w.r.t.\ the
 ratio function $r(M_n,n)$. 
\end{theorem}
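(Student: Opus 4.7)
The plan is to prove the converse by a truncated-list Fano inequality, and the direct part by a bin-and-identify reduction to ordinary channel coding.

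\textbf{Converse.} Given any sequence of $(n,M_n,\epsilon_n,\zeta_n)$-codes with $\epsilon_n,\zeta_n\to 0$, I would first neutralize the random list size by truncation: set $\tilde L_n(y^n):=L_n(y^n)$ when $|L_n(y^n)|\leq\lfloor M_n/r(M_n,n)\rfloor$ and $\tilde L_n(y^n):=\emptyset$ otherwise. Then $|\tilde L_n(Y^n)|\leq M_n/r(M_n,n)$ deterministically and the first-kind error of $\tilde L_n$ is $\epsilon_n':=\Pr(S\notin\tilde L_n(Y^n))\leq\epsilon_n+\zeta_n\to 0$. Conditioning on the success indicator $V=\mathbbm{1}\{S\in\tilde L_n(Y^n)\}$ and applying a list-Fano argument yields
\[
H(S|Y^n)\leq h_2(\epsilon_n')+(1-\epsilon_n')\log\tfrac{M_n}{r(M_n,n)}+\epsilon_n'\log M_n.
\]
Since $H(S)=\log M_n$, this rearranges to $I(S;Y^n)\geq(1-\epsilon_n')\log r(M_n,n)-h_2(\epsilon_n')$. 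Invoking data processing $I(S;Y^n)\leq I(X^n;Y^n)\leq\max_{P(X^n)}I(X^n;Y^n)$, dividing by $n$, taking $\limsup$, and using (\ref{eq: inf stab cond}) delivers $\rho_r\leq C$.

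\textbf{Direct.} For $0<\rho_r<C$, pick $\delta>0$ with $\rho_r+\delta<C$. Let $B_n:=\max\{\lfloor M_n/r(M_n,n)\rfloor,1\}$ and partition $\{1,\ldots,M_n\}$ into $N_n:=\lceil M_n/B_n\rceil$ bins of sizes at most $B_n$. An elementary case split (depending on whether $r(M_n,n)\leq M_n/2$ or not) shows $\frac{1}{n}\log N_n\leq\rho_r+\delta$ for all $n$ sufficiently large. Since $\rho_r+\delta<C$, the classical channel coding theorem for information-stable channels satisfying (\ref{eq: inf stab cond}) furnishes a codebook $\{\tilde x^n(b)\}_{b=1}^{N_n}$ and decoder $\hat b(\cdot)$ whose average bin-error probability $\bar P_e^{\rm bin}$ vanishes. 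I then encode message $m$ by $\tilde x^n(b(m))$, where $b(m)$ denotes the bin of $m$, and decode by the list
\[
L_n(y^n):=\{m\in\{1,\ldots,M_n\}:b(m)=\hat b(y^n)\}.
\]
By construction, $|L_n(y^n)|\leq B_n\leq M_n/r(M_n,n)$ deterministically, so $\zeta_n=0$; the first-kind error $\Pr(\hat b(Y^n)\neq b(S))$ is bounded by $(N_nB_n/M_n)\bar P_e^{\rm bin}\leq 2\bar P_e^{\rm bin}\to 0$, which closes the argument.

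\textbf{Main obstacle.} The central subtlety lives in the converse: $|L_n(Y^n)|$ is a \emph{random variable}, not a deterministic cap, so direct application of list-Fano is not possible. The truncation trick cleanly resolves this by converting the second-kind error $\zeta_n$ into a mild inflation of the first-kind error (from $\epsilon_n$ to $\epsilon_n+\zeta_n$), both of which still vanish. The direct part is essentially bookkeeping: checking that $\frac{1}{n}\log N_n$ tracks $\frac{1}{n}\log r(M_n,n)$ up to $o(1)$ even when $r(M_n,n)$ fails to divide $M_n$ (which is why the $B_n=1$ regime, forcing $r(M_n,n)>M_n/2$ and therefore $\frac{1}{n}\log M_n\leq\rho_r+o(1)$, must be treated separately), and that the induced non-uniform bin prior inflates the first-kind error by at most a constant factor.
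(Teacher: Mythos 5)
Your proof is correct and follows essentially the same route as the paper: the converse rests on the same list-decoding Fano inequality plus data processing under condition (\ref{eq: inf stab cond}), and the direct part is the same bin-and-copy reduction to an ordinary capacity-achieving code with expurgation. The only cosmetic difference is in how the random list size is handled in the converse: you truncate oversized lists and fold $\zeta_n$ into the first-kind error (exactly the device the paper itself uses in the proof of its general-channel converse, Theorem \ref{th: main theorem 2}(a)), whereas the paper's Theorem \ref{th: list theorem} instead bounds $\mathbb{E}\left[\log|L_n(Y^n)|\right]$ by splitting on the overflow event, which keeps $\epsilon_n$ and $\zeta_n$ separate and yields the tradeoff inequality later exploited for the identification-code corollaries.
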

The theorem straightforwardly implies the following theorem. 
\begin{corollary}\label{th: rho max theorem 1}
If the capacity of the channel $\bW$ satisfies (\ref{eq: inf stab cond}), then 
\begin{flalign}
 \rho_{sup}(\bW)=C. 
 \end{flalign}
 \end{corollary}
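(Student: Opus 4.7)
The theorem splits into two parts that I will handle with different tools: the converse via a Fano-type inequality adapted to random list sizes, and the direct part by reducing ratio list decoding to ordinary channel coding for suitably sized message groups.

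For the converse, I fix a feasible sequence of $(n,M_n,\epsilon_n,\zeta_n)$-codes with $\epsilon_n,\zeta_n\to 0$ and introduce the binary indicator $E$ of the bad event $\{S\notin L_n(Y^n)\}\cup\{|L_n(Y^n)|>M_n/r(M_n,n)\}$, so $\Pr(E=1)\leq \epsilon_n+\zeta_n$ by the union bound. Conditional on $E=0$, $S$ lies in a $Y^n$-measurable subset of cardinality at most $M_n/r(M_n,n)$, hence $H(S|Y^n,E=0)\leq \log(M_n/r(M_n,n))$, while trivially $H(S|Y^n,E=1)\leq \log M_n$. Expanding $H(S|Y^n)\leq H(E|Y^n)+H(S|Y^n,E)$ and grouping the bad-event contribution so that the term multiplying $\Pr(E=1)$ involves $\log r(M_n,n)$ rather than $\log M_n$, I obtain
\begin{align*}
H(S|Y^n)\leq h(\epsilon_n+\zeta_n)+\log M_n-(1-\epsilon_n-\zeta_n)\log r(M_n,n).
\end{align*}
Since $S$ is uniform, $H(S)=\log M_n$, and by data processing along the Markov chain $S-X^n-Y^n$, this rearranges to $(1-\epsilon_n-\zeta_n)\log r(M_n,n)\leq I(X^n;Y^n)+h(\epsilon_n+\zeta_n)$; dividing by $n$, invoking (\ref{eq: inf stab cond}), and taking the limsup delivers $\rho_r\leq C$.

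For the direct part, given $0<\rho_r<C$ I pick any $R\in(\rho_r,C)$, so that $r(M_n,n)\leq e^{nR}$ for all sufficiently large $n$. Setting $L_n=\lfloor M_n/r(M_n,n)\rfloor$ (which is $\geq 1$ since $r(M_n,n)\leq M_n$) and $K_n=\lceil M_n/L_n\rceil$, I partition $\{1,\ldots,M_n\}$ into $K_n$ groups of size at most $L_n\leq M_n/r(M_n,n)$. A short case analysis (separating $M_n\geq 2r(M_n,n)$ from $M_n<2r(M_n,n)$) gives $K_n\leq 2r(M_n,n)+1$, so $\limsup_n \frac{1}{n}\log K_n\leq \rho_r<R$. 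Shannon's direct theorem under (\ref{eq: inf stab cond}), applied at rate $R<C$, furnishes a sequence of codes with at least $K_n$ messages and average decoding error $\epsilon_n'\to 0$. Assigning every message in a group the codeword for that group's index, and defining the list decoder to output the entire decoded group, makes the first-kind error coincide with the underlying Shannon decoding error and thus vanish, while the second-kind error is identically zero by construction.

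The principal subtleties I anticipate concern bookkeeping at both edges. In the converse, the naive Fano expansion would leave a term $(\epsilon_n+\zeta_n)\log M_n$ that need not vanish after division by $n$, since $M_n$ is a priori unbounded; the regrouping above replaces it by $(\epsilon_n+\zeta_n)\log r(M_n,n)$, which is harmless because $\frac{1}{n}\log r(M_n,n)$ is bounded along the sequence. In the direct part, the delicate regime is when $M_n/r(M_n,n)$ is of order one: then $L_n=1$ and $K_n=M_n$, but the inequality $M_n<2r(M_n,n)$ in that regime is precisely what keeps $\frac{1}{n}\log K_n$ controlled by $\rho_r$.
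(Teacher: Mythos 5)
Your proposal is correct and follows essentially the same route as the paper: a Fano-type bound on $H(S|Y^n)$ that cancels the $\log M_n$ terms and leaves $(1-\epsilon_n-\zeta_n)\log r(M_n,n)\le I(X^n;Y^n)+O(1)$ for the converse, and, for the direct part, a reduction to ordinary channel coding by mapping groups of at most $M_n/r(M_n,n)$ messages to a single codeword of a capacity-achieving code and outputting the decoded group as the list. The paper observes that for this corollary alone the direct part is even more immediate --- take $r(M_n,n)=M_n$ (list size $1$) and invoke the ordinary coding theorem --- but your stronger construction, which mirrors the paper's proof of the direct part of Theorem 1, of course suffices.
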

 Note that in fact $C$ should be denoted $C(\bW)$, but this is omitted for notational convenience. 
The converse part of Corollary \ref{th: rho max theorem 1} follows from (\ref{eq: converse part ofTh1}), and the direct part follows by choosing $r(M_n,n)=M_n$ and by using the direct part of the ordinary channel coding theorem. 

Before we prove Theorem \ref{th: main theorem} we present the following example which demonstrates the application of the theorem. 

\begin{example}
It is required to construct a sequence of codes such that reliable list decoding is possible with list size $L(M_n,n)=M_n^{1-1/n} $. Theorem \ref{th: main theorem}  implies that the maximal number of messages $M_n$ as a function of the blocklength $n$ is essentially $M_n\approx e^{n^2C}$; that is,  for all $\epsilon>0$ 
there exists a sequence of $(n,e^{n^2(C-\epsilon)},\epsilon_n,\zeta_n)$ codes such that $\lim_{n\rightarrow\infty} \epsilon_n=0$, and $ \lim_{n\rightarrow\infty} \zeta_n=0$, and no such sequence exists if $M_n>e^{n^2(C+\epsilon)}$.
To see this, note that the (monotonically increasing) function $M_n=e^{n^2C}$ is the solution of the equation
$
\frac{1}{n}\log r(M_n,n)=\frac{1}{n}\log \left(M_n/M_n^{1-1/n}\right)=C$.
\end{example}

We begin with the converse part. 

\subsection{A proof of the Converse Part of Theorem \ref{th: main theorem} for Information Stable Channels and Implications}
The proof of the following Fano-type inequality for list decoding appears in \cite[Appendix 3.E]{RaginskySason_FnT}.\footnote{The inequality in \cite{RaginskySason_FnT} involves a discrete random variable $Y$  but it is easy to realize that in fact $Y$ need not necessarily be discrete for the proof to hold.} 
\begin{lemma}\label{lm: Fanolike lemma} 
Let $X$ be a discrete random variable over alphabet $\calX$. Let $Y$ be another random variable, and let $\calL(Y)\subseteq \calX$ be a mapping from $\calY$ to the set of all subsets of $\calX$.\footnote{One can think of $\calL(Y)$ as a set of estimators of $X$ which lie in $\calX$.} Denote 
\begin{flalign}
P_e=& \Pr\left(X\notin \calL(Y) \right)
\end{flalign}
 then
\begin{flalign}\label{eq: FnoFano}
&H(X|Y) \leq  h_2(P_e) +P_e\cdot \log(|\calX|-1)+(1-P_e)\mathbb{E}(\log|\calL(Y)|),
\end{flalign}
where $h_2(\cdot)$ is the binary entropy function; i.e., $h_2(t)=-t\log(t)-(1-t)\log(1-t)$. 
\end{lemma}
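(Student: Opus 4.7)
The plan is to mimic Fano's original argument, extended to accommodate a variable-sized list. First I would define the Bernoulli error indicator $E = \mathbf{1}\{X \notin \calL(Y)\}$, so that $\Pr(E=1) = P_e$ and $E$ is a deterministic function of the pair $(X, Y)$.

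Next I would apply the entropy chain rule to $H(X, E \mid Y)$ in two different orderings. Because $E$ is determined by $(X,Y)$, we have $H(X, E \mid Y) = H(X \mid Y)$; on the other hand $H(X, E \mid Y) = H(E \mid Y) + H(X \mid Y, E)$. Using $H(E \mid Y) \leq H(E) = h_2(P_e)$ therefore yields
\begin{flalign*}
H(X \mid Y) \;\leq\; h_2(P_e) + H(X \mid Y, E).
\end{flalign*}

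The residual term would then be split according to the value of $E$, namely
\begin{flalign*}
H(X \mid Y, E) \;=\; (1-P_e)\, H(X \mid Y, E=0) + P_e\, H(X \mid Y, E=1).
\end{flalign*}
Conditional on $Y = y$ and $E=0$, the support of $X$ is $\calL(y)$, while conditional on $Y = y$ and $E=1$ it is $\calX \setminus \calL(y)$, whose cardinality is at most $|\calX|-1$ whenever $|\calL(y)| \geq 1$. Invoking the maximum-entropy bound $H(X')\leq \log|\mathrm{supp}(X')|$ pointwise in $y$ and averaging over $y$ then produces the two remaining terms $(1-P_e)\mathbb{E}(\log|\calL(Y)|)$ and $P_e\log(|\calX|-1)$, closing the inequality.

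I do not foresee any genuine obstacle here; the proof is a direct lift of Fano's inequality in which the ``ambiguity set'' for $X$ given $Y$ has random size $|\calL(Y)|$ rather than $1$. The only bookkeeping point is the averaging step that produces $\mathbb{E}(\log|\calL(Y)|)$, which is where the varying-list-size generalization enters non-trivially, handled by recognizing $\sum_{y}\Pr(Y=y, E=0)\log|\calL(y)| = (1-P_e)\,\mathbb{E}[\log|\calL(Y)|\mid E=0]$ and identifying this with the expectation appearing in the lemma.
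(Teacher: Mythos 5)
Your decomposition $H(X\mid Y)=H(X,E\mid Y)\le h_2(P_e)+(1-P_e)H(X\mid Y,E=0)+P_e\,H(X\mid Y,E=1)$ together with the support bounds $H(X\mid Y=y,E=0)\le\log|\calL(y)|$ and $H(X\mid Y=y,E=1)\le\log(|\calX|-1)$ is exactly the standard list-decoding Fano argument (the paper does not reprove the lemma; it delegates to \cite[Appendix 3.E]{RaginskySason_FnT}, where this is the route taken). The problem is the final step, which you dismiss as bookkeeping but which is where the proof breaks. Averaging the $E=0$ branch over $y$ gives
\begin{flalign*}
(1-P_e)\,H(X\mid Y,E=0)\;\le\;\sum_y\Pr(Y=y,E=0)\log|\calL(y)|\;=\;\mathbb{E}\bigl[(1-E)\log|\calL(Y)|\bigr]\;=\;(1-P_e)\,\mathbb{E}\bigl[\log|\calL(Y)|\,\big|\,E=0\bigr],
\end{flalign*}
and this conditional expectation cannot simply be ``identified with'' the unconditional $\mathbb{E}(\log|\calL(Y)|)$ in the lemma. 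The inequality you would need, $\mathbb{E}[\log|\calL(Y)|\mid E=0]\le\mathbb{E}[\log|\calL(Y)|]$, points in the wrong direction precisely in the typical situation where larger lists are more likely to contain $X$.

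This gap is not patchable, because the bound with the unconditional expectation is in fact false as stated. Take $\calX=\{1,\dots,32\}$ with $X$ uniform, $Y$ uniform on $\{a,b\}$ and independent of $X$, $\calL(a)=\calX$ and $\calL(b)=\{1\}$. Then $P_e=\tfrac{1}{2}\cdot\tfrac{31}{32}=\tfrac{31}{64}$, $H(X\mid Y)=5$ bits, $\mathbb{E}[\log_2|\calL(Y)|]=\tfrac{5}{2}$, and the right-hand side of (\ref{eq: FnoFano}) equals $h_2(\tfrac{31}{64})+\tfrac{31}{64}\log_2 31+\tfrac{33}{64}\cdot\tfrac{5}{2}\approx 0.999+2.400+1.289=4.688<5$. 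What your argument does establish correctly is $H(X\mid Y)\le h_2(P_e)+P_e\log(|\calX|-1)+\mathbb{E}[(1-E)\log|\calL(Y)|]$, which in turn implies the weaker but valid $H(X\mid Y)\le h_2(P_e)+P_e\log(|\calX|-1)+\mathbb{E}[\log|\calL(Y)|]$ (since $\log|\calL(Y)|\ge 0$ for nonempty lists); either of these forms is what you should state and prove, and either suffices for the way the lemma is invoked in the proof of Theorem \ref{th: list theorem}, where $\mathbb{E}(\log|L_n(Y^n)|)$ is anyway bounded through $\zeta_n$ and $|L_n(Y^n)|\le M_n$. You should flag the discrepancy rather than assert the identification; as written, the last sentence of your proof is the step that would fail.
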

The generalization of Lemma \ref{lm: Fanolike lemma} for fixed-size list decoding (i.e., when $ | \mathcal{L}(Y) | $ is independent of observation $Y$), which provides an upper bound on the Arimoto-R\'{e}nyi conditional entropy of an arbitrary positive order $\alpha$ as a function of the list decoding error probability, appears in \cite[Section 4]{SasonVerdu2017ArimotoRenyi_IT}. This result generalizes previously reported bounds in \cite[Section 5]{AhlswedeGacsKorner1976} and \cite[Section 1]{KimSutivongCover2008}, providing an upper bound on the conventional conditional entropy of Shannon (the latter being a special case of the Arimoto-R\'{e}nyi conditional entropy for $\alpha=1$) in the setting of fixed-size list decoding.

Our next result uses Lemma \ref{lm: Fanolike lemma} to upper bound $\frac{1}{n}\log(r(M_n,n))$.
\begin{theorem}\label{th: list theorem}
Let a channel, a code $\{P(\cdot |m)_{m\in\{1,...,M_n\}} ,\; L_n(y^n)_{y^n\in\calY^n}\}$, and a ratio function $r(M_n,n)$ be given. 
The following inequality holds
\begin{flalign}\label{eq: yom bassa}
\frac{1}{n}\log(r(M_n,n)) \leq \frac{1}{n}  \frac{  I(X^n;Y^n)+1}{(1-\epsilon_n) (1-\zeta_n)} 
\end{flalign}
where $X^n$ and $Y^n$ are the channel input and output vectors, respectively, and
\begin{flalign}\label{eq: epsilon zeta dfns}
\epsilon_n=& \Pr(S\notin L_n(Y^n)),  \nonumber\\
\zeta_n\triangleq & \Pr\left(|L_n(Y^n)|>\frac{M_n}{r(M_n,n)}\right).
\end{flalign}
\end{theorem}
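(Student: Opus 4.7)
The plan is to apply Lemma \ref{lm: Fanolike lemma} directly to the pair $(S, Y^n)$, with $\calL(Y^n)=L_n(Y^n)$ and $|\calX|=M_n$, so that $P_e=\epsilon_n$. This yields
\begin{flalign}
H(S|Y^n) \leq h_2(\epsilon_n) + \epsilon_n \log(M_n-1) + (1-\epsilon_n)\, \mathbb{E}\bigl[\log|L_n(Y^n)|\bigr].
\end{flalign}
The left-hand side I bound from below using the usual data-processing argument: since $S$ is uniform on $\{1,\dots,M_n\}$ and $S\to X^n\to Y^n$ forms a Markov chain, $H(S|Y^n)=\log M_n - I(S;Y^n)\geq \log M_n - I(X^n;Y^n)$.

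Next, I would handle the expected log-list-size by conditioning on whether the event $A=\{|L_n(Y^n)|\leq M_n/r(M_n,n)\}$ holds. On $A$, $\log|L_n(Y^n)|\leq \log M_n - \log r(M_n,n)$; on $A^c$, whose probability is $\zeta_n$, I use the trivial bound $\log|L_n(Y^n)|\leq \log M_n$. This gives
\begin{flalign}
\mathbb{E}\bigl[\log|L_n(Y^n)|\bigr] \leq \log M_n - (1-\zeta_n)\log r(M_n,n).
\end{flalign}
Plugging this in, along with $\log(M_n-1)\leq \log M_n$, and combining with the lower bound on $H(S|Y^n)$, the coefficients of $\log M_n$ on the right are $\epsilon_n + (1-\epsilon_n)[(1-\zeta_n)+\zeta_n]=1$, so the $\log M_n$ terms cancel exactly against the left-hand side.

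What remains is
\begin{flalign}
(1-\epsilon_n)(1-\zeta_n)\log r(M_n,n) \leq I(X^n;Y^n) + h_2(\epsilon_n),
\end{flalign}
and bounding $h_2(\epsilon_n)\leq 1$ and dividing through by $n(1-\epsilon_n)(1-\zeta_n)$ gives exactly \eqref{eq: yom bassa}. There is no real obstacle here; the whole argument is essentially bookkeeping once the cancellation of the $\log M_n$ terms is observed. The only subtlety worth flagging is that the cancellation requires the bound $\log(M_n-1)\leq \log M_n$ rather than the tighter $\log(M_n-1)$, but this loosening is harmless and is what makes the identity $\epsilon_n+(1-\epsilon_n)(1-\zeta_n)+(1-\epsilon_n)\zeta_n=1$ line up cleanly.
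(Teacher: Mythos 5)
Your proposal is correct and follows essentially the same route as the paper: the Fano-type bound of Lemma \ref{lm: Fanolike lemma} applied to $(S,Y^n,L_n(Y^n))$, the data-processing inequality for the Markov chain $S-X^n-Y^n$, and the split of $\mathbb{E}\bigl[\log|L_n(Y^n)|\bigr]$ according to whether the list exceeds $M_n/r(M_n,n)$. The only cosmetic difference is that you retain $h_2(\epsilon_n)$ and $\log(M_n-1)$ until the end before bounding them by $1$ and $\log M_n$, whereas the paper applies these bounds immediately; the resulting inequality is the same.
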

\begin{proof}
Since $S-X^n-Y^n$ is a Markov chain we have
\begin{flalign}\label{eq: C bound}
I(S;Y^n)
\leq & I(X^n;Y^n).
\end{flalign}
From Lemma \ref{lm: Fanolike lemma} applied to $S,Y^n,L_n(Y^n)$ in the role of $X,Y,\calL(Y)$, respectively, we obtain
\begin{flalign}\label{eq: applying the lemma}
H(S|L_n(Y^n))\leq 1+\epsilon_n\cdot \log M_n +(1-\epsilon_n)\mathbb{E}\left(\log\left(|L_n(Y^n)|\right)\right).
\end{flalign}
Therefore,
\begin{flalign}
&\log(M_n)\nonumber\\
&= H(S)\nonumber\\
&= I(S;Y^n) +H(S|Y^n)\nonumber\\
&\leq I(X^n;Y^n)+H(S|Y^n)\label{eq: I let H le}\\
&\leq  I(X^n;Y^n)+1+\epsilon_n\cdot \log M_n +(1-\epsilon_n)\mathbb{E}\left(\log\left(|L_n(Y^n)|\right)\right),\label{eq: result1}
\end{flalign}
where (\ref{eq: I let H le}) follows from (\ref{eq: C bound}), and (\ref{eq: result1}) follows from (\ref{eq: applying the lemma}).
 
Now, since $|L_n(Y^n)|\leq M_n$, by definition of $\zeta_n$ we obtain
 \begin{flalign}\label{eq: result2}
&\mathbb{E}\left(\log|L_n(Y^n)|\right)
\leq  \zeta_n \cdot \log(M_n) + (1-\zeta_n)\cdot \log\left(\frac{M_n}{r(M_n,n)}\right). 
\end{flalign}
Consequently, from (\ref{eq: result1})-(\ref{eq: result2}) we get 
\begin{flalign}\label{eq: result3}
&\log(M_n)\leq   I(X^n;Y^n)+1+\epsilon_n\cdot \log M_n 
+(1-\epsilon_n)\left[\zeta_n \cdot \log(M_n) + (1-\zeta_n)\cdot \log\left(\frac{M_n}{r(M_n,n)}\right) \right].
\end{flalign}
Hence,
\begin{flalign}\label{eq: result4}
&\left(1-\epsilon_n \right)(1-\zeta_n)\cdot \log(M_n) \leq   I(X^n;Y^n)+1 
+(1-\epsilon_n) (1-\zeta_n)\cdot \log\left(\frac{M_n}{r(M_n,n)}\right).
\end{flalign}
That is, 
\begin{flalign}\label{eq: result5}
\frac{1}{n}\log(r(M_n,n)) \leq   \frac{1}{n}\frac{I(X^n;Y^n)+1}{(1-\epsilon_n) (1-\zeta_n)} ,
\end{flalign}
and this concludes the proof of Theorem \ref{th: list theorem}. 
\end{proof}
This straightforwardly leads to the converse part (i) of Theorem \ref{th: main theorem} which holds for channels that satisfy (\ref{eq: inf stab cond}); that is, 
if the sequence $\{M_i\}_{i=1}^{\infty}$ is feasible for the channel w.r.t.
 ratio function $r(M_n,n)$ then 
$
\rho_r\leq  C$.

A few conclusions can be drawn from the converse part of Theorem \ref{th: main theorem} concerning special cases of list decoding setups. 
\begin{itemize}
\item The converse part of Theorem \ref{th: main theorem} highlights the tradeoff between the probability of successful list decoding, ($1-\epsilon_n$), and the probability of not exceeding the desired ratio function, ($1-\zeta_n$). In particular, rearranging (\ref{eq: yom bassa}) we obtain
\begin{flalign}\label{eq: yom bassa beemet}
(1-\epsilon_n) (1-\zeta_n) \leq  \frac{  I(X^n;Y^n)+1/n}{\log(r(M_n,n))} ,
\end{flalign}
which gives a lower bound on $\epsilon_n$ for fixed  $\zeta_n$ and vice versa. 
\item Consider for example the case $|L(Y^n)|=e^{n\Theta}$; that is, $r(M_n,n)=\frac{M_n}{e^{n\Theta}}$. In this case, defining $R=\frac{1}{n}\log M_n$ we obtain from (\ref{eq: converse part ofTh1}) 
\begin{flalign}\label{eq: exponential size of list}
R\leq   C+\Theta,
\end{flalign}
which  for the special case of discrete memoryless channels is a known result. 

\item It is easy to see that if $r(M_n,n)=1$; that is, the list size is equal to $M_n$ and $\rho_r=0$, then the converse part $
\rho_r\leq  C$ poses no restriction on the number of messages, which is not surprising since a list that contains the entire set of all possible messages always includes the transmitted one, and therefore infinitely many messages can be transmitted.
\end{itemize}

The following corollary gives a lower bound on the probability that the list size exceeds $e^{-n\delta}L(M_n,n)$ if the normalized logarithm of the ratio $r(M_n,n)$ exceeds $C-\epsilon$ for $\delta>\epsilon>0$.

\begin{corollary}\label{cr: corolalldjfbv}
If $C=\limsup_{n\rightarrow\infty} \max_{P(X^n)} \frac{1}{n}I(X^n;Y^n)$, and a sequence of $(n,M_n,\epsilon_n,\zeta_n)$-codes is given such that 
$\limsup_{n\rightarrow\infty}\frac{1}{n}\log r(M_n,n)\geq C-\epsilon$, and $\lim_{n\rightarrow\infty}\epsilon_n=0$, 
then for all $\delta>0$
\begin{flalign}\label{eq: corollary result 4}
\limsup_{n\rightarrow\infty} \Pr\left(|L_n(Y^n)|>\frac{M_n}{e^{n\delta}r(M_n,n)}\right) \geq 1-\frac{C}{C+\delta-\epsilon}   .
\end{flalign}
\end{corollary}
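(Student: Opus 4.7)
The plan is to invoke Theorem \ref{th: list theorem} with a suitably inflated ratio function so that the probability $\zeta_n$ appearing in that theorem coincides precisely with the probability we wish to bound from below in \eqref{eq: corollary result 4}. Define the modified ratio function
\begin{flalign}
\tilde{r}(M_n,n)\triangleq e^{n\delta}\, r(M_n,n),\nonumber
\end{flalign}
which is still a valid ratio function for all sufficiently large $n$ (since $r(M_n,n)\leq M_n$ and $\tilde{r}(M_n,n)\leq M_n$ whenever $e^{n\delta}\leq M_n/r(M_n,n)$, and otherwise the probability in the corollary is trivially zero and the bound holds automatically). Let $\tilde{\zeta}_n\triangleq \Pr\big(|L_n(Y^n)|>M_n/\tilde{r}(M_n,n)\big)$, which is exactly the quantity in \eqref{eq: corollary result 4}.

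Applying Theorem \ref{th: list theorem} with $\tilde{r}(M_n,n)$ in place of $r(M_n,n)$ and rearranging as in \eqref{eq: yom bassa beemet} yields
\begin{flalign}
(1-\epsilon_n)(1-\tilde{\zeta}_n)\left[\delta+\frac{1}{n}\log r(M_n,n)\right]\leq \frac{1}{n}I(X^n;Y^n)+\frac{1}{n}.\nonumber
\end{flalign}

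Next I would argue by contradiction. Set $\zeta^\star\triangleq \limsup_{n\to\infty}\tilde{\zeta}_n$ and suppose $\zeta^\star<1-C/(C+\delta-\epsilon)$. Pick a subsequence $\{n_k\}$ along which $\frac{1}{n_k}\log r(M_{n_k},n_k)\to \rho$ for some $\rho\geq C-\epsilon$ (which exists by the hypothesis on the limsup of $\frac{1}{n}\log r(M_n,n)$). On this subsequence, $(1-\epsilon_{n_k})\to 1$ by the hypothesis $\epsilon_n\to 0$, and $\liminf_k(1-\tilde{\zeta}_{n_k})\geq 1-\zeta^\star$. Also, since $I(X^n;Y^n)/n\leq \max_{P(X^n)}I(X^n;Y^n)/n$ and the latter has $\limsup\leq C$ by \eqref{eq: inf stab cond}, the right-hand side of the displayed inequality has $\limsup\leq C$. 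Taking $\liminf$ on the left and $\limsup$ on the right along $\{n_k\}$ gives
\begin{flalign}
(1-\zeta^\star)(\delta+\rho)\leq C,\nonumber
\end{flalign}
which, using $\delta+\rho\geq C+\delta-\epsilon$, forces $1-\zeta^\star\leq C/(C+\delta-\epsilon)$, contradicting the assumed strict inequality. Hence $\zeta^\star\geq 1-C/(C+\delta-\epsilon)$, which is \eqref{eq: corollary result 4}.

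The main thing to get right is the limsup/liminf bookkeeping: the inequality from Theorem \ref{th: list theorem} must be converted into a bound on a single subsequence on which both the information-stability capacity bound and the ratio-function hypothesis are active simultaneously. A minor subtlety is verifying that $\tilde{r}(M_n,n)\leq M_n$ along the relevant subsequence so that $\tilde{r}$ is admissible; but in the complementary regime where $e^{n\delta}r(M_n,n)>M_n$, the list size $|L_n(Y^n)|$ can never exceed $M_n/\tilde{r}(M_n,n)<1$ and the corollary's probability is trivially one, so the bound holds with room to spare.
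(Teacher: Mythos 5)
Your proof is correct and takes essentially the same route as the paper: both apply Theorem \ref{th: list theorem} with the inflated ratio function $e^{n\delta}r(M_n,n)$ (so that $\zeta_n$ becomes exactly the probability in \eqref{eq: corollary result 4}) and then pass to the limit using $\epsilon_n\to 0$, the hypothesis on $\limsup_n \frac{1}{n}\log r(M_n,n)$, and \eqref{eq: inf stab cond}. Your explicit subsequence/contradiction bookkeeping and the separate handling of the degenerate regime $e^{n\delta}r(M_n,n)>M_n$ are just more careful renderings of the limit-taking step the paper performs directly.
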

Before proving Corollary \ref{cr: corolalldjfbv}, we mention its implication to the identification problem. To this end, we briefly describe the information-theoretic identification problem \cite{AhlswedeDueck1989}.  
A randomized identification code of parameters $(n,M_n)$ for the DMC $W$ from $\calX$ to $\calY$ is defined  by a family of $M_n$ conditional distributions $P(\cdot|i)\in\calP(\calX^n)$ and $M_n$ decision regions $\calD_i\subseteq \calY^n$ where $i=1,...,M_n$. The decision regions are not necessarily disjoint and the question is how large $M_n$ can be such that 
\begin{flalign}
&\sum_{x^n\in\calX^n } P(x^n|i)W^n(\calD_i^c|x^n)\leq \lambda_1, \mbox{ and}\nonumber\\  
&\sum_{x^n\in\calX^n } P(x^n|i)W^n(\calD_j|x^n)\leq \lambda_2,\;  \forall i\neq j\in \{1,...,M_n\}.
\end{flalign} 
It turns out that unlike the classical channel decoding problem, 
in order to guarantee arbitrarily small $\lambda_1$ and $\lambda_2$ for the DMC, $M_n$ can grow double-exponentially fast with $n$ with normalized iterated logarithm\footnote{The iterated logarithm of $M_n$ stands for $\log(\log(M_n))$. } that is equal to Shannon's channel capacity $C$. 

Note that 
by choosing $\delta>\epsilon>0$ and $r(M_n,n)=\log(M_n)$, Corollary \ref{cr: corolalldjfbv} implies that a necessary condition for obtaining high values of a normalized iterated logarithm of the number of messages of the identification code is a non-vanishing 
probability that the list will include essentially $\approx\frac{M_n}{\log(M_n)}$ messages. We state this special case as a separate corollary.
\begin{corollary}\label{cr: ID corollayr}
If $C=\limsup_{n\rightarrow\infty} \max_{P(X^n)} \frac{1}{n}I(X^n;Y^n)$ and a sequence of $(M_n,n)$-identification codes is given such that
\begin{flalign}
 \frac{1}{n}\log\log(M_n)\geq C-\epsilon
\end{flalign}
then 
\begin{flalign}\label{eq: corollary result 4}
\limsup_{n\rightarrow\infty} \Pr\left(|L_n(Y^n)|>\frac{\exp\{e^{n(C-\epsilon}\}}{e^{n(C-\epsilon+\delta)}}\right) \geq 1-\frac{C}{C+\delta-\epsilon}   .
\end{flalign}
\end{corollary}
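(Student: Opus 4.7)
The plan is to derive Corollary \ref{cr: ID corollayr} as a direct specialization of Corollary \ref{cr: corolalldjfbv}. The key observation is that an identification code naturally induces a list decoder via $L_n(y^n) = \{m : y^n \in \calD_m\}$, and the first-kind error probability $\epsilon_n = \Pr(S \notin L_n(Y^n))$ is bounded by $\lambda_1$, which is made arbitrarily small by the definition of a reliable identification code. So the hypotheses of Corollary \ref{cr: corolalldjfbv} concerning $\epsilon_n \to 0$ are automatically satisfied.

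First I would choose the ratio function $r(M_n,n) = \log(M_n)$. Under the hypothesis $\frac{1}{n}\log\log(M_n) \geq C - \epsilon$, this immediately gives $\frac{1}{n}\log r(M_n,n) \geq C - \epsilon$, which is exactly the condition required in Corollary \ref{cr: corolalldjfbv}. Applying that corollary then yields, for every $\delta > 0$,
\begin{flalign}
\limsup_{n\rightarrow\infty} \Pr\left(|L_n(Y^n)| > \frac{M_n}{e^{n\delta}\log(M_n)}\right) \geq 1 - \frac{C}{C+\delta-\epsilon}. \nonumber
\end{flalign}

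Second, I would translate the resulting threshold into the form stated in the corollary. Writing $M_n = \exp\{\log M_n\}$ and using the hypothesis $\log M_n \geq e^{n(C-\epsilon)}$, I would verify that the map $x \mapsto x/\log x$ is monotonically increasing for $x \geq e$, so that
\begin{flalign}
\frac{M_n}{e^{n\delta}\log(M_n)} \;\geq\; \frac{\exp\{e^{n(C-\epsilon)}\}}{e^{n\delta}\cdot e^{n(C-\epsilon)}} \;=\; \frac{\exp\{e^{n(C-\epsilon)}\}}{e^{n(C-\epsilon+\delta)}}. \nonumber
\end{flalign}
Therefore the event $\{|L_n(Y^n)| > M_n/(e^{n\delta}\log M_n)\}$ is contained in the event appearing in the corollary, and the probability bound transfers.

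The derivation is essentially a bookkeeping exercise rather than a genuinely new argument, since all the work has already been done in Theorem \ref{th: list theorem} and Corollary \ref{cr: corolalldjfbv}; the only mildly delicate step is the monotonicity argument that lets one replace $M_n/\log M_n$ by the cleaner double-exponential expression, together with the conceptual point that the identification decoder can be reinterpreted as a list decoder whose first-kind error is controlled by $\lambda_1$.
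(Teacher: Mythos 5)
Your proposal is correct and follows essentially the same route as the paper: the paper also obtains this corollary by specializing Corollary \ref{cr: corolalldjfbv} to the ratio function $r(M_n,n)=\log(M_n)$, reading the identification decoder's decision regions as a list decoder whose first-kind error is controlled by $\lambda_1$. The only thing you add is the explicit monotonicity step converting the threshold $M_n/(e^{n\delta}\log M_n)$ into the double-exponential form, which the paper leaves implicit.
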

This result implies that while a reliable identification code whose ID-rate is high does narrow down significantly the number of hypothesized messages with high probability, the list size of false positive message identifications exceeds $M_n/(\log(M_n)e^{n\delta})$ (and grows double-exponentially with $n$).

We next prove Corollary \ref{cr: corolalldjfbv}. 
\begin{proof}
Denote
\begin{flalign}
\eta_n\triangleq & \Pr\left(|L_n(Y^n)|>\frac{M_n}{\exp(n\delta)r(M_n,n)}\right), 
\end{flalign}
and repeat the proof of Theorem \ref{th: list theorem} (steps (\ref{eq: C bound})-(\ref{eq: result5})) with $r(M_n,n)$ replaced by $\exp(n\delta)\cdot r(M_n,n)$ and $\zeta_n$ replaced by $\eta_n$. This gives
\begin{flalign}\label{eq: corollary result}
\frac{1}{n}\log r(M_n,n) \leq   \frac{1}{n}\frac{I(X^n;Y^n)+1}{(1-\epsilon_n) (1-\eta_n)}-\delta .
\end{flalign}
Taking the limit as $n$ tends to infinity, and noting that in fact $\epsilon_n=\Pr(S\notin \calD_\calS)$ which is required to vanish as $n$ tends to infinity for reliable identification, we obtain
\begin{flalign}\label{eq: corollary result 2}
C-\epsilon \leq   \frac{C}{1-\limsup_{n\rightarrow\infty}\eta_n }-\delta,
\end{flalign}
which implies that 
\begin{flalign}\label{eq: corollary result 3}
 \limsup_{n\rightarrow\infty}\eta_n\geq 1-\frac{C}{C+\delta-\epsilon}   .
\end{flalign}

\end{proof}

\subsection{Proof of the Direct Part of Theorem \ref{th: main theorem}}

\begin{proof}
The proof of the direct part of Theorem \ref{th: main theorem} is quite straightforward. Assume a sequence $\{M_i\}_{i=1}^{\infty}$ satisfies (\ref{eq: direct part ofTh1}) and let 
\begin{flalign}\label{eq:  kfbvbdjsb}
&\Delta=  C-\limsup_{n\rightarrow\infty} \frac{1}{n}\log r(M_n,n)=C-\rho_r.
\end{flalign}
 Note that $\Delta\in (0,C)$. 
Let $\epsilon\in(0,\Delta/2]$ be an arbitrarily small constant and consider a sequence of classical channel encoders 
$\varphi_n:\; \{1,...,2^{n(C-\Delta+\epsilon)}\}\rightarrow \calX^n$, or codebooks $
\{x^n(m)\}, m=1,...,2^{n(C-\Delta+\epsilon)}$, having an average probability of error $\epsilon_n=\Pr(\hat{S}(Y^n)\neq S )$ such that $\lim_{n\rightarrow\infty}\epsilon_n=0$ (where $\hat{S}(Y^n)$ is the maximum likelihood decoder's output). 

Now, 
for blocklength $n$, 
construct a larger codebook which contains $\frac{M_n}{r(M_n,n)}$ identical copies of each of the codewords $\{x^n(m)\}$. 
The larger codebook corresponds to $M'_n=\frac{M_n}{r(M_n,n)}\cdot2^{n(C-\Delta+\epsilon)}$ messages.
From (\ref{eq:  kfbvbdjsb}) we know that for all sufficiently large $n$, $r(M_n,n)\leq 2^{n(C-\Delta+\epsilon/2)}$ and therefore, for all sufficiently large $n$,  
$
M_n'\geq M_n$. 
Since the transmitted codeword can be decoded successfully with probability $1-\epsilon_n$, the list decoder can output a list containing all the messages that are mapped to that codeword. 

To obtain a codebook of $M_n$ codewords, we simply use a subset of the $M_n'$ codewords (of size $M_n$) which yield an average probability of error at least as low as $\epsilon_n$. 
\end{proof}

\subsection{An Extension to General Channels}\label{sc: An Extension to General Channels}
While the direct result of Theorem \ref{th: main theorem} (ii) holds for every general channel, the converse result (i) is stated only for channels that satisfy $C=\limsup_{n\rightarrow\infty} \max_{P(X^n)} \frac{1}{n}I(X^n;Y^n)$. 
We next present a converse that generalizes Theorem \ref{th: main theorem} (i) to general channels which do not necessarily satisfy this condition. 

\begin{theorem}\label{th: main theorem 2}

(a)
If the sequence $\{M_i\}_{i=1}^{\infty}$ is feasible for the channel w.r.t.\ the ratio function $r(M_n,n)$, then 
\begin{flalign}\label{eq: converse part ofTh1 b}
\rho_r\leq  C,
\end{flalign}
where $\rho_r$ is defined in (\ref{eq: rho r dfn}). 

(b)
If $\left\{\{P(\cdot|m)\}_{m\in \{1,...,M_n\}},\; \{L_n(y^n)\}_{y^n\in\calY^n}\}\right\}$, n=1,2,..., is a sequence of $(n,M_n,\epsilon_n,\zeta_n)$-codes 
such that $\lim_{n\rightarrow\infty }\epsilon_n=0$ and
\begin{flalign}\label{eq: assumption general channels}
&\limsup_{n\rightarrow\infty} \frac{\mathbb{E}(|L_n(Y^n)|)}{L(M_n,n)}<\infty,
\end{flalign}
where $L(M_n,n)=M_n/r(M_n,n)$, then, 
\begin{flalign}\label{eq: converse part ofTh1 b c}
\rho_r\leq  C.
\end{flalign}
\end{theorem}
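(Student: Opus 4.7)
The plan is to prove both parts by establishing a list-decoding version of the Verd\'u-H\'an information-spectrum converse and then invoking the general capacity formula $C=\sup_{\bX}\underline{I}(\bX;\bY)$, where $\underline{I}(\bX;\bY)$ is the liminf in probability of the normalized information density $i_n(X^n,Y^n):=\frac{1}{n}\log\frac{W^{(n)}(Y^n|X^n)}{P_{Y^n}(Y^n)}$ and $P_{Y^n}$ is the code-induced output distribution. The central identity I would use is
\begin{flalign*}
1-\epsilon_n \;=\; \Pr(S\in L_n(Y^n)) \;\leq\; \Pr(i_n>\gamma) + \Pr\bigl(S\in L_n(Y^n),\ i_n\leq \gamma\bigr),
\end{flalign*}
combined with the change-of-measure inequality $W^{(n)}(y^n|x^n)\leq e^{n\gamma}P_{Y^n}(y^n)$ on $\{i_n\leq \gamma\}$, which yields
\begin{flalign*}
\Pr\bigl(S\in L_n(Y^n),\ i_n\leq \gamma\bigr) \;\leq\; \frac{e^{n\gamma}}{M_n}\sum_{y^n}P_{Y^n}(y^n)\,|L_n(y^n)| \;=\; \frac{e^{n\gamma}}{M_n}\,\mathbb{E}\bigl[|L_n(Y^n)|\bigr].
\end{flalign*}

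For part (a), intersecting the above with the event $\{|L_n(Y^n)|\leq L(M_n,n)\}$, which has probability at least $1-\zeta_n$, yields $\Pr(i_n\leq\gamma)\leq \epsilon_n+\zeta_n+e^{n\gamma}/r(M_n,n)$. For part (b), the hypothesis (\ref{eq: assumption general channels}) furnishes a constant $K<\infty$ with $\mathbb{E}[|L_n(Y^n)|]\leq K\cdot L(M_n,n)$ for all large $n$, giving $\Pr(i_n\leq\gamma)\leq \epsilon_n+Ke^{n\gamma}/r(M_n,n)$. Setting $\gamma=\gamma_n:=\frac{1}{n}\log r(M_n,n)-\delta$ for a fixed $\delta>0$ then forces $\Pr(i_n\leq \gamma_n)\to 0$ as $n\to\infty$ in both cases.

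To derive (\ref{eq: converse part ofTh1 b}) and (\ref{eq: converse part ofTh1 b c}), I would pass to a subsequence $\{n_k\}$ that realizes the $\limsup$ in (\ref{eq: rho r dfn}), so that $\gamma_{n_k}\to\rho_r-\delta$ and $\Pr(i_{n_k}\leq\alpha)\to 0$ for every $\alpha<\rho_r-\delta$. Invoking the Verd\'u-H\'an formula with an input process that matches the code on $\{n_k\}$ then forces $\alpha\leq C$, and letting $\alpha\uparrow\rho_r$ followed by $\delta\downarrow 0$ yields $\rho_r\leq C$. The main obstacle I anticipate is the change-of-measure step in the presence of a random, output-dependent list size: the natural upper bound is $\mathbb{E}[|L_n(Y^n)|]$ rather than a deterministic worst case, and this must be converted to $1/r(M_n,n)$ either via the probabilistic list-size cap from $\zeta_n$ in part (a) or via the expectation hypothesis (\ref{eq: assumption general channels}) in part (b). A secondary subtlety is that $\rho_r$ is defined as a $\limsup$ rather than a $\liminf$, which is handled by the subsequence extraction in the concluding step.
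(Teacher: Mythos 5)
Your proposal follows essentially the same route as the paper: an information-spectrum (Verd\'u--Han) converse in which a change-of-measure step bounds the probability of a small normalized information density by $\epsilon_n$ plus $e^{-n\gamma}\,\mathbb{E}\left[|L_n(Y^n)|\right]/L(M_n,n)$, with part (a) handled through the list-size cap of probability at least $1-\zeta_n$ and part (b) through the expectation hypothesis, followed by an appeal to $C=\sup_{\bX}\underline{I}(\bX;\bY)$. The only cosmetic differences are that the paper thresholds $\frac{1}{n}\log\frac{P(S|Y^n)}{P(S)}$ and then invokes the data-processing theorem for the inf-information rate to reach $\underline{I}(\bX;\bY)$, whereas you threshold the channel information density directly, and the paper disposes of $\zeta_n$ in part (a) by replacing oversized lists with the empty set (yielding an $(n,M_n,\epsilon_n+\zeta_n,0)$-code) rather than intersecting with the event $\{|L_n(Y^n)|\leq L(M_n,n)\}$.
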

Consequently the following corollary follows in the same manner as that of Corollary \ref{th: rho max theorem 1}. 
\begin{corollary}\label{th: rho max theorem 2}
For every channel $\bW$
\begin{flalign}
 \rho_{sup}(\bW)=C. 
 \end{flalign}
\end{corollary}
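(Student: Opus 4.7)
The goal is to pin down $\rho_{sup}(\bW)$ for an arbitrary channel $\bW$, with no information-stability assumption. My plan is to split this into the two inequalities $\rho_{sup}(\bW)\leq C$ and $\rho_{sup}(\bW)\geq C$, each of which reduces to a one-step appeal to results already in hand.

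For the converse inequality $\rho_{sup}(\bW)\leq C$, I would invoke Theorem \ref{th: main theorem 2}(a) directly. Pick any achievable normalized log-ratio $\rho$; by Definition of achievability there exist a ratio function $r(M_n,n)$ and a feasible sequence $\{M_i\}_{i=1}^{\infty}$ with $\limsup_{n\to\infty}\frac{1}{n}\log r(M_n,n)=\rho$, i.e.\ $\rho_r=\rho$ in the notation of \eqref{eq: rho r dfn}. Theorem \ref{th: main theorem 2}(a) yields $\rho=\rho_r\leq C$, and taking the supremum over all achievable $\rho$ gives $\rho_{sup}(\bW)\leq C$.

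For the direct inequality $\rho_{sup}(\bW)\geq C$, my plan is to exhibit achievable normalized log-ratios arbitrarily close to $C$ by collapsing the ratio list decoding problem to ordinary channel coding. Set $r(M_n,n)=M_n$; this is a valid ratio function since $1\leq M_n\leq M_n$, and the target list size is $L(M_n,n)=1$. Any decoder that outputs a single message (e.g.\ the maximum likelihood estimate) automatically satisfies $|L_n(Y^n)|\leq 1=L(M_n,n)$, so $\zeta_n=0$ identically, and feasibility reduces to vanishing average error probability $\epsilon_n\to 0$ in the classical sense. By the general channel coding theorem of Verd\'u--Han \cite{VerduHan1994}, for every $\delta\in(0,C)$ there exists a sequence of codes with $M_n=\lceil e^{n(C-\delta)}\rceil$ messages and vanishing error probability. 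Then $\rho_r=\limsup_n \frac{1}{n}\log M_n=C-\delta$ is achievable, so $\rho_{sup}(\bW)\geq C-\delta$; letting $\delta\downarrow 0$ concludes this direction. (If $C=0$ the direct part is vacuous.)

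I do not anticipate a real obstacle: the converse is a single invocation of the already-established Theorem \ref{th: main theorem 2}(a), and the direct part is a reduction to standard Verd\'u--Han coding via the trivial choice $L(M_n,n)=1$. The only point that deserves a line of care is that under this choice $\zeta_n$ vanishes \emph{exactly}, so no auxiliary argument is needed to pass between the different flavors of list-size control; this is what makes the direct part work for arbitrary, possibly non-information-stable, channels.
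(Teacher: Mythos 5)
Your proposal is correct and follows essentially the same route as the paper: the converse is exactly the invocation of Theorem \ref{th: main theorem 2}(a), and the direct part is the paper's own reduction (stated for Corollary \ref{th: rho max theorem 1} and reused here) of taking $r(M_n,n)=M_n$ so that $L(M_n,n)=1$ and appealing to the ordinary (Verd\'u--Han) channel coding theorem. No gaps.
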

Before proving 
Theorem \ref{th: main theorem 2}, it is noted that the converse part of Theorem \ref{th: main theorem} follows from Theorem \ref{th: main theorem 2} (a). Nevertheless, we included the proof of the converse part of Theorem \ref{th: main theorem} since it highlights the tradeoff between the probability of successful list decoding, and the probability of not exceeding the desired ratio function. Moreover, Corollaries \ref{cr: corolalldjfbv} and \ref{cr: ID corollayr} result from the proof of the converse part of Theorem \ref{th: main theorem}. Finally, the proof of Theorem \ref{th: main theorem 2} relies on the information-spectrum method, and the proof of the converse part of Theorem \ref{th: main theorem} relies on a Fano-like inequality. The latter gives insight on how the classical converse theorem proof is extended to the ratio list decoding case.

We next prove Theorem \ref{th: main theorem 2}. 
\begin{proof}
By assumption in part (a), since the sequence $\{M_i\}_{i=1}^{\infty}$ is feasible w.r.t.\ the ratio function $r(M_n,n)$, there exists a sequence $\left\{\{P(\cdot|m)\}_{m\in \{1,...,M_n\}},\; \{L_n(y^n)\}_{y^n\in\calY^n} \right\}$ of $(n,M_n,\epsilon_n,\zeta_n)$-codes such that $\lim_{n\rightarrow\infty}\epsilon_n=0$ and $\lim_{n\rightarrow\infty}\zeta_n=0$. 

Next we construct a sequence of $(n,M_n,\epsilon_n+\zeta_n,0)$-codes by replacing $L_n(y^n)$ with
\begin{flalign}
\tilde{L}_n(y^n)=\left\{\begin{array}{ll}
L_n(y^n) & \mbox{ if } |L_n(y^n)|\leq M_n/r(M_n,n)\\
\phi & \mbox { o.w. }
 \end{array}\right.,
\end{flalign}
where $\phi$ is the empty set. 

Therefore, as far as part (a) is concerned, we can assume without loss of generality that there exists a sequence of $(n,M_n,\epsilon_n,0)$-codes, i.e., with, $\zeta_n=0$ such that $\lim_{n\rightarrow\infty}\epsilon_n=0$. 

The rest of the proof refers to both part (a) and part (b) of the theorem. The proof relies on a modification of the Verd\'{u}-Han \cite{VerduHan1994} technique.

Let a sequence of $(n,M_n,\epsilon_n,\zeta_n)$-codes be given having vanishing error of the first kind, that is, 
$\lim_{n\rightarrow\infty }\epsilon_n=0$ and either (a) zero error of the second kind, i.e., $\zeta_n=0$ for all $n$, or (b) $\limsup_{n\rightarrow\infty } \frac{ \mathbb{E}\left(|L_n(Y^n)|\right)}{L(M_n,n)}=a <\infty$.

Let $(S,X^n,Y^n)$ be the random message, channel input sequence, and channel output sequence, respectively, and with a slight abuse of notation denote by $P(\cdot)$ their joint distribution. 
We have, 
\begin{flalign}
&\Pr\left(\frac{1}{n}\log \frac{P(S|Y^n)}{P(S)}\leq \frac{1}{n}\log\left( r(M_n,n)\right)-\gamma\right)\nonumber\\
&= \Pr\left(\frac{1}{n}\log \frac{P(S|Y^n)}{1/M_n}\leq \frac{1}{n}\log\left(\frac{M_n}{L(M_n,n)}\right)-\gamma\right)\label{eq: step 1}\\
&= \Pr\left(P(S|Y^n)\leq \frac{e^{-n\gamma}}{L(M_n,n)}\right)\nonumber\\
&\leq  \Pr(Y^n\in \calD_S^c)
+ \Pr\left(Y^n\in \calD_S, P(S|Y^n)\leq \frac{e^{-n\gamma}}{L(M_n,n)}\right)\nonumber\\
&=  \epsilon_n+ \sum_{m=1}^N \sum_{y^n\in \calD_m:\; P(m|y^n)\leq \frac{e^{-n\gamma}}{L(M_n,n)}} P_{Y^n}(y^n)P_{S|Y^n}(m|y^n)\nonumber\\
&\leq  \epsilon_n+ \frac{e^{-n\gamma}}{L(M_n,n)}\sum_{m=1}^N\mathbb{E}(\mathbbm{1}\{Y^n\in \calD_m\}) \nonumber\\
&=  \epsilon_n+ \frac{e^{-n\gamma}}{L(M_n,n)} \mathbb{E}\left(\sum_{m=1}^N\mathbbm{1}\{Y^n\in \calD_m\}\right) \nonumber\\
&= \epsilon_n+ \frac{e^{-n\gamma}}{L(M_n,n)} \mathbb{E}\left(|L_n(Y^n)|\right) \label{eq: long chain of inequalities}
\end{flalign}
where (\ref{eq: step 1}) follows since the messages are equiprobable and by definition of $r(M_n,n)$. 
Now, as far as part (a) is concerned, if $\zeta_n=0$ we have that $|L(Y^n)|\leq L(M_n,n)$ with probability one and therefore $\mathbb{E}\left(|L_n(Y^n)|\right) \leq L(M_n,n)$. Alternatively, for part (b) we have 
$\limsup_{n\rightarrow\infty } \frac{ \mathbb{E}\left(|L_n(Y^n)|\right)}{L(M_n,n)}=a <\infty$.
Thus, under either $\zeta_n=0$ or $\limsup_{n\rightarrow\infty } \frac{ \mathbb{E}\left(|L_n(Y^n)|\right)}{L(M_n,n)}<\infty$ we obtain
\begin{flalign}
\limsup_{n\rightarrow\infty} \frac{ \mathbb{E}\left(|L_n(Y^n)|\right) }{L(M_n,n)} \leq & \max\{1,a\}. 
\end{flalign}
As noted before, $S$ is also a function of the block length $n$. So, denote by $\bS$ the sequence of random variables $\{S_i\}_{i=1}^{\infty}$ where $S_n$ is the message that is transmitted over the channel $W^{(n)}$. 

Since by assumption the r.h.s.\ of (\ref{eq: long chain of inequalities}) vanishes; i.e., $\lim_{n\rightarrow\infty}\epsilon_n=0$, one must have that the l.h.s.\ of (\ref{eq: long chain of inequalities}) vanishes as well, and therefore by definition of the limit inferior in probability\footnote{
The limit inferior in probability \cite{VerduHan1994} of a sequence of random variables $X_n, n\geq 1$, denoted $p\mbox{-}\liminf X_n$, is the supremum of all $\alpha\in\mathbb{R}$ such that $\lim_{n\rightarrow\infty}\Pr\left\{X_n< \alpha\right\}=0$; i.e., 
$
p\mbox{-liminf } X_n=\sup\{\alpha:\; \limsup_{n\rightarrow\infty}\Pr\left\{X_n< \alpha\right\}=0\}
$.
\label{fn: liminf in prob}} we have
\begin{flalign}
\limsup_{n\rightarrow\infty} \frac{1}{n}\log\left( r(M_n,n)\right)-\gamma\leq& \sup_{P(\bX|\bS)} \underline{I}(\bS;\bY)\nonumber\\
\leq& \sup_{P(\bX)}  \underline{I}(\bX;\bY),\label{eq: agfdghg}
\end{flalign}
where $P(\bX|\bS)$ denotes a series of conditional distributions $P^{(n)}(X^n|S_n), n=1,2,...$, $P(\bX)$ denotes a series of distributions $P^{(n)}(X^n), n=1,2,...$, and 
where (\ref{eq: agfdghg}) follows since $S_n-X^n-Y^n$ is a Markov chain and from the data processing Theorem \cite[Theorem 9]{VerduHan1994}.

Since $\gamma$ can be made arbitrarily small, and since $ C=\sup_{P(\bX)}  \underline{I}(\bX;\bY)$ is the general formula for the channel capacity \cite{VerduHan1994} we have 
\begin{flalign}
&\limsup_{n\rightarrow\infty} \frac{1}{n}\log\left( r(M_n,n)\right)\leq  C.
\end{flalign}
\end{proof}

Now considering the special case $r(M_n,n)=M_n/e^{n\Theta}$, we can write the general formula for the capacity of the channel with the desired list size $|L(Y^n)|\leq e^{n\Theta}$, which we denote by $C( \Theta)$.  \begin{corollary}\label{th: c eq c plus theta}
For every channel, the supremum of achievable rates $R=\liminf_{n\rightarrow\infty} \frac{1}{n}\log M_n$ such that 
$\lim_{n\rightarrow\infty}\Pr(|L_n(Y^n)|>e^{n\Theta})=0$ and $\lim_{n\rightarrow\infty}\Pr(S\notin L_n(Y^n))=0$ is
\begin{flalign}\label{eq: exponential size of list equality}
C(\Theta)=   C+\Theta.
\end{flalign}
\end{corollary}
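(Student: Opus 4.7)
The plan is to recognize Corollary \ref{th: c eq c plus theta} as a direct specialization of Theorem \ref{th: main theorem 2}(a) and the direct part of Theorem \ref{th: main theorem}(ii) to the ratio function $r(M_n,n)=M_n/e^{n\Theta}$. Under this choice $L(M_n,n)=e^{n\Theta}$, so the two probability conditions in the corollary---$\lim_{n\to\infty}\Pr(S\notin L_n(Y^n))=0$ and $\lim_{n\to\infty}\Pr(|L_n(Y^n)|>e^{n\Theta})=0$---are precisely the feasibility conditions of Definition \ref{df: achievable rho}, and $\rho_r=\bigl(\limsup_{n\to\infty}\frac{1}{n}\log M_n\bigr)-\Theta$.

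For the converse bound $C(\Theta)\le C+\Theta$, I would invoke Theorem \ref{th: main theorem 2}(a) on the feasible sequence $\{M_n\}$ to obtain $\rho_r\le C$, and then use $R=\liminf_{n\to\infty}\frac{1}{n}\log M_n\le \limsup_{n\to\infty}\frac{1}{n}\log M_n=\rho_r+\Theta\le C+\Theta$. For the direct bound $C(\Theta)\ge C+\Theta$, I would fix $\epsilon\in(0,C)$ (the degenerate case $C=0$ handled separately below) and take $M_n=\lceil e^{n(C+\Theta-\epsilon)}\rceil$, so that $\rho_r=C-\epsilon\in(0,C)$. The direct part of Theorem \ref{th: main theorem}(ii)---which, as noted at the start of Section \ref{sc: An Extension to General Channels}, applies to every general channel---then produces a sequence of $(n,M_n,\epsilon_n,\zeta_n)$-codes with $\epsilon_n,\zeta_n\to 0$, achieving rate $C+\Theta-\epsilon$. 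Letting $\epsilon\downarrow 0$ completes the bound. For the edge case $C=0$, taking $M_n=\lceil e^{nR}\rceil$ with any $R<\Theta$ and letting the decoder output $L_n(y^n)=\{1,\ldots,M_n\}$ for every $y^n$ trivially meets both probability requirements for large $n$, yielding $C(\Theta)\ge\Theta=C+\Theta$.

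I do not anticipate a substantive technical obstacle, since the heavy lifting has already been done in Theorems \ref{th: main theorem} and \ref{th: main theorem 2}. The only points requiring care are the liminf/limsup bookkeeping in converting between the rate definition $R=\liminf_{n\to\infty}\frac{1}{n}\log M_n$ used in the corollary statement and the $\limsup$ appearing in $\rho_r$, and choosing $M_n$ so that $\rho_r$ lies strictly inside $(0,C)$ to activate the hypothesis of the direct part.
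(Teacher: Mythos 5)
Your proposal is correct and matches the paper's intent exactly: the paper states Corollary \ref{th: c eq c plus theta} without a separate proof, presenting it as the immediate specialization of the general-channel converse (Theorem \ref{th: main theorem 2}) and the direct part of Theorem \ref{th: main theorem} to the ratio function $r(M_n,n)=M_n/e^{n\Theta}$, which is precisely your argument. Your extra care with the $\liminf$/$\limsup$ bookkeeping and the degenerate case $C=0$ only makes explicit what the paper leaves implicit.
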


\section{Mismatched List Decoding}\label{sc: Mismatched List Decoding}
Thus far, we have considered list decoding with decoding regions $\{\calD_m\}$ that can be optimized to minimize the error probabilities. In this section we consider the case of decoding regions that are determined by a fixed decoding function, and we confine the discussion to deterministic codes. 
Nevertheless, we consider the setup of a general channel $\bW=\{W^{(n)}\}_{n=1}^{\infty}$ as in Section \ref{sc: An Extension to General Channels}.
For the sake of completeness, we describe the operation of the list decoder, as well as the classical mismatched decoder (list size equal to $1$).

A deterministic block-code $\calC_n$ of blocklength $n$ and $M_n$ messages, consists of $M_n$ $n$-vectors, that is, $\calC_n=\left\{x^n(m)\right\}_{m = 1}^{M_n}$, which represent $M_n$ different messages; i.e., it is defined by the encoding function
\begin{flalign}
f_n:\; \{1,...,M_n\} \rightarrow \calX^n.
\end{flalign}
As before, it is assumed that all possible messages are a-priori equiprobable; i.e., the random message $S_n$ satisfies $\Pr (S_n=m) = \frac{1}{M_n}$ for all 
$m\in\{1,...,M_n\}$.

Let a mapping
\begin{flalign}\label{eq: qn mapping}
q_n:\;  \calX^n\times \calY^n\rightarrow  \mathbb{R}
\end{flalign}
be given. 
A classical mismatched decoder outputs the message index $\hat{m}$ which maximizes the metric; that is,
\begin{flalign}\label{eq: mismatched decoder classical output}
\hat{m}=\mbox{argmax}_m q_n(x^n(m),y^n),
\end{flalign}
where if the maximizer is not unique, an error is declared. Let $\hat{S}(Y^n)$ stand for the output of the mismatched decoder. 

In the case of classical mismatched decoding (list size $1$), $M_n$ cannot grow faster than exponentially with $n$. 
We say that $R$ is an achievable rate with a decoding metric sequence $\bq=\{q_i\}_{i=1}^{\infty}$, if there exists a sequence of encoders $f_n:\{1,...,2^{nR}\}$ such that $\lim_{n\rightarrow\infty}\Pr\left(\hat{S}(Y^n)\neq S_n\right)=0$. 

The mismatch capacity $C_{\bq}(\bW)$ is the supremum of achievable rates using the decoder (\ref{eq: mismatched decoder classical output}). 

A mismatched list decoder is required to declare a list 
$L_n(y^n, q_n)\triangleq \{i_1,i_2,...,i_{|L(y^n,q_n)|}\}$ 
of messages which correspond to the highest metric values with the channel output, i.e., it should satisfy 
\begin{flalign}\label{eq: decoder decision rule}
&q_n(x^n(i),y^n)\geq q_n(x^n(j),y^n), \forall j\notin L_n(y^n,q_n),
\end{flalign}
where as before, the transmitted message is expected to belong to the list. 
Note that as in the matched case, we allow the list size to depend on $y^n$, and $M_n$ can potentially grow faster than exponentially with $n$. 
Note also that the requirement (\ref{eq: decoder decision rule}) does not define the decoder uniquely; the list size as a function of $y^n$ is not specified, and the decoder may be defined in several ways. For example, 
the decoder's output can be defined by the $M_n/r(M_n,n)$ messages having the highest metrics values, 
it can also be defined by a threshold $\tau(y^n)$ which the metric should cross for the message to be included in the list. The results of Sections \ref{sc: mim r l d} and
\ref{sc:  A General Formula for the Mismatch Capacity} concern any decoder that satisfies the requirement (\ref{eq: decoder decision rule}). Section \ref{sc: A Tight Expression for the Average Error Probability} analyzes the average probability of error of a decoder with a constant list size, i.e., a list size that is equal for all $y^n$.

We first present results that pertain to the counterpart of ratio list decoding to the mismatched decoding setup. Then we specialize the results to the case of a list size that grows exponentially with $n$.

\subsection{Mismatched Ratio List Decoding}\label{sc: mim r l d}

We next present counterparts to Definitions \ref{df: a code parameters}-\ref{df: maximal achievable rho} for the mismatched case. 
We begin with a formal definition of the average probability of error of the first and second kinds in mismatched ratio list decoding with a ratio function $r(M_n,n)$.
\begin{definition}\label{eq: P_e W calC_n q_n dfn}
For a given codebook $\calC_n$ of size $M_n=|\calC_n|$, let $\epsilon_{e,r}(W^{(n)},\calC_n,q_n)$ and $\zeta_{e,r}(W^{(n)},\calC_n,q_n)$ designate the average probability of error of the first and second kinds that  are incurred by the decoder $q_n$ employed on the output of the channel $W^{(n)}$ with list size $M_n/r(M_n,n)$, respectively, that is
\begin{flalign}
\epsilon_{e,r}(W^{(n)},\calC_n,q_n)
=&\Pr(X^n\notin L_n(Y^n,q_n))\nonumber\\
\zeta_{e,r}(W^{(n)},\calC_n,q_n)
=&\Pr(|L_n(Y^n,q_n)| > M_n/r(M_n,n)).
\end{flalign}
\end{definition}

We will also be interested in the probability of the event that the number of codewords whose metrics values are at least as high as the transmitted one exceeds the desired number $\frac{M_n}{r(M_n,n)}$, that is, 
\begin{flalign}\label{eq: P e r}
P_{e,r}\left(W^{(n)},\calC_n,q_n\right)\triangleq &\Pr\left\{|\{\bx'\in\calC_n :\; q_n(\bx',Y^n)\geq q_n(X^n,Y^n)\}| > \frac{M_n}{r(M_n,n)}\right\}.\end{flalign}
It holds that
\begin{flalign}\label{eq: P epsilon zeta}
P_{e,r}\left(W^{(n)},\calC_n,q_n\right)\leq \Pr(X^n\notin L_n(Y^n,q_n)\mbox{ or } |L_n(Y^n,q_n)| > M_n/r(M_n,n)).
\end{flalign}
In fact, $P_{e,r}\left(W^{(n)},\calC_n,q_n\right)$ is the probability of error in list decoding with constant list size in the sense of being equal for all $y^n$; that is, $L_n(y^n,q_n)\equiv M_n/r(M_n,n)$, $\forall y^n$.

\begin{definition}
A code $\calC_n$ is an $\left(n,M_n,\epsilon, \zeta ;q_n\right)$-code w.r.t.\ the ratio function $r(M_n,n)$, for  channel $W^{(n)}$ if 
it has $M_n$ codewords of length $n$, $\epsilon_{e,r}(W^{(n)},\calC_n,q_n)=
\epsilon$, and $\zeta_{e,r}(W^{(n)},\calC_n,q_n)=
\zeta$. 
\end{definition}

\begin{definition}\label{df: achievable rho jfbkdnj}
We say that a sequence $\{M_i\}_{i=1}^{\infty}$, $M_i\in\mathbb{N}$ is feasible for the channel w.r.t.\ a ratio sequence $r(M_n,n)$ with decoding metric sequence $\{q_i\}_{i=1}^{\infty}$ if there exists a sequence of $(n,M_n,\epsilon_n,\zeta_n;q_n)$-codes such that $\lim_{n\rightarrow\infty} \epsilon_n=0$ and $\lim_{n\rightarrow\infty} \zeta_n=0$.
\end{definition}

\begin{definition}
We say that $\rho$ is an achievable normalized log ratio for the channel with decoding metric sequence $\{q_i\}_{i=1}^{\infty}$ if there exists a ratio function $r(M_n,n)$ and a corresponding feasible sequence $\{M_i\}_{i=1}^{\infty}$, $M_i\in\mathbb{N}$ such that 
$\limsup_{n\rightarrow\infty}\frac{1}{n}\log r(M_n,n)= \rho$.
\end{definition}
\begin{definition}\label{df: maximal achievable rho irtghikshfksvh}
The {\it mismatch ratio-capacity} of the channel, $\rho_{\bq}(\bW)$, is the supremum of achievable normalized log ratios with a decoding metric sequence $\bq=\{q_i\}_{i=1}^{\infty}$ . 
\end{definition}

Let $\bq=\{q_i\}_{i=1}^{\infty}$, be a given sequence of decoding metrics, and consider the following random variable, which is a function of $(X^n,Y^n)$ 
\begin{flalign}
\Phi_{q_n}(X^n,Y^n)\triangleq 
 \mbox{Pr}\left\{q_n(\tilde{X}^n,Y^n)\geq q_n(X^n,Y^n)|X^n,Y^n\right\}
\end{flalign} where $X^n$ and $Y^n$ are the input and output channel vectors, respectively, and $\tilde{X}^n$ is independent of $(X^n,Y^n)$ and is distributed identically to $X^n$.

The following lemma extends \cite[Lemma 1]{SomekhBaruch_general_formula_IT2015}.
\begin{lemma}\label{lm: VerduHan Lemma}
Let $X^n$ be the random variable uniformly distributed over a code $\calC_n$ of blocklength $n$ and cardinality $M_n\triangleq |\calC_n|$, 
and $Y^n$ the output of a channel $W^{(n)}$ with $X^n$ as the input, then 
\begin{flalign}\label{eq: VerduHan UB yyy}
P_{e,r}(W^{(n)},\calC_n,q_n)
=&\Pr\left\{-\frac{1}{n} \log \left(\Phi_{q_n}(X^n,Y^n)\right) <\frac{1}{n}\log r(M_n,n) \right\}.
\end{flalign}
\end{lemma}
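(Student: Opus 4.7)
The plan is to express $\Phi_{q_n}(X^n,Y^n)$ as a normalized count of codewords and then rewrite the event in the definition of $P_{e,r}$ through a short chain of equivalent inequalities.

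First, I would establish the key identity
\begin{flalign*}
\Phi_{q_n}(X^n,Y^n) \;=\; \frac{1}{M_n}\bigl|\{x'\in\calC_n:\; q_n(x',Y^n)\geq q_n(X^n,Y^n)\}\bigr|.
\end{flalign*}
This follows directly from the hypothesis of the lemma: $X^n$ is uniform over $\calC_n$, so $\tilde{X}^n$ (being distributed identically to $X^n$ and independent of $(X^n,Y^n)$) is also uniform over $\calC_n$ and independent of $(X^n,Y^n)$. Conditioning on $(X^n,Y^n)=(x^n,y^n)$ and summing the indicator $\mathbbm{1}\{q_n(\tilde x^n,y^n)\geq q_n(x^n,y^n)\}$ over the uniform distribution of $\tilde X^n$ yields the displayed count divided by $M_n$.

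Next, denote $N(X^n,Y^n)\triangleq|\{x'\in\calC_n:\; q_n(x',Y^n)\geq q_n(X^n,Y^n)\}|$, so that $N(X^n,Y^n)=M_n\,\Phi_{q_n}(X^n,Y^n)$. Substituting into the definition \eqref{eq: P e r} of $P_{e,r}$ and chaining equivalent rewritings gives
\begin{flalign*}
P_{e,r}(W^{(n)},\calC_n,q_n)
&= \Pr\left\{N(X^n,Y^n) > \tfrac{M_n}{r(M_n,n)}\right\}\\
&= \Pr\left\{M_n\,\Phi_{q_n}(X^n,Y^n) > \tfrac{M_n}{r(M_n,n)}\right\}\\
&= \Pr\left\{\Phi_{q_n}(X^n,Y^n) > \tfrac{1}{r(M_n,n)}\right\}\\
&= \Pr\left\{-\tfrac{1}{n}\log \Phi_{q_n}(X^n,Y^n) < \tfrac{1}{n}\log r(M_n,n)\right\},
\end{flalign*}
where the last line uses the monotonicity of $-\log(\cdot)$ (and the fact that $\Phi_{q_n}(X^n,Y^n)\geq 1/M_n > 0$, so the logarithm is well-defined).

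There is no real obstacle; the only thing to be careful about is the identity $\Phi_{q_n}=N/M_n$, which requires both that $\tilde X^n$ is uniform on $\calC_n$ and that it is independent of $(X^n,Y^n)$. Both are built into the hypothesis, and once the identity is in place the remainder is bookkeeping with equivalent inequalities. Note that the argument does not require $q_n$ or $W^{(n)}$ to have any particular structure, which is precisely the generality that the lemma aims to capture, extending \cite[Lemma~1]{SomekhBaruch_general_formula_IT2015} from the list-size-one setting to a general ratio function $r(M_n,n)$.
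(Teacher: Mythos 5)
Your proof is correct and follows essentially the same route as the paper: both establish the identity $\Phi_{q_n}(X^n,Y^n)=\frac{1}{M_n}\left|\left\{x'\in\calC_n:\,q_n(x',Y^n)\geq q_n(X^n,Y^n)\right\}\right|$ from the uniformity of $\tilde{X}^n$ over the codebook, and then rewrite the event in \eqref{eq: P e r} via monotonicity of the logarithm. Your added remark that $\Phi_{q_n}\geq 1/M_n>0$ (so the logarithm is well-defined) is a small extra care the paper omits, but it does not change the argument.
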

\begin{proof}
Note that
\begin{flalign}\label{eq: this equation explains strong converse}
\Phi_{q_n}(X^n,Y^n)=&\sum_{\bx'\in\calC_n:\; q_n(\bx',Y^n)\geq q_n(X^n,Y^n)} P^{(n)}(\bx') \nonumber\\
=&\frac{|\{\bx'\in\calC_n:\; q_n(\bx',Y^n)\geq q_n(X^n,Y^n)\}|}{M_n} \nonumber\\
\end{flalign}
where the last equality follows since $X^n$ is distributed uniformly over the codebook of size $M_n$. Hence, the right hand side of (\ref{eq: VerduHan UB yyy}) is equal to
\begin{flalign}
&\Pr\bigg\{-\frac{1}{n} \log \big(|\{\bx'\in\calC_n :\; 
q_n(\bx',Y^n)\geq q_n(X^n,Y^n)\}|\big) < -\frac{1}{n}\log \frac{M_n}{r(M_n,n)} \bigg\}\nonumber\\
&= \Pr\left\{|\{\bx'\in\calC_n :\; q_n(\bx',Y^n)\geq q_n(X^n,Y^n)\}| > \frac{M_n}{r(M_n,n)}\right\}\nonumber\\
&= P_{e,r}\left(W^{(n)},\calC_n,q_n\right).
\end{flalign}
This concludes the proof of Lemma \ref{lm: VerduHan Lemma}. 
\end{proof} 

Based on Lemma \ref{lm: VerduHan Lemma} we can now present a ratio list decoding theorem, which is the counterpart of Theorem \ref{th: main theorem} for the mismatched case. 

Recall the definition of $\rho_r$ (\ref{eq: rho r dfn}). 
\begin{theorem}\label{th: main theorem mis}
Let $\{M_i\}_{i=1}^{\infty}$, $M_i\in\mathbb{N}_i$, be a given sequence and let $r(M_n,n)$ be a ratio function. 

(i) Converse part: If the sequence $\{M_i\}_{i=1}^{\infty}$ is feasible for channel $\bW$ w.r.t.\ the
 ratio function $r(M_n,n)$ and a decoding metric sequence $\bq$, then 
\begin{flalign}\label{eq: converse part ofTh1 mis}
\rho_r\leq  C_{\bq}(\bW).
\end{flalign}

(ii) Direct part: If the sequence $\{M_i\}_{i=1}^{\infty}$ satisfies 
\begin{flalign}\label{eq: direct part ofTh1  b}
0<\rho_r<C_{\bq}(\bW),
\end{flalign}
then it is feasible for the channel $\bW$ w.r.t.\ ratio function $r(M_n,n)$ and metric sequence $\bq$. 
\end{theorem}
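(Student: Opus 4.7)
The proof takes the same two-part shape as Theorem~\ref{th: main theorem 2}, but the role of the information density $\tfrac{1}{n}\log\tfrac{P(S\mid Y^n)}{P(S)}$ is now played by the ``score density'' $-\tfrac{1}{n}\log\Phi_{q_n}(X^n,Y^n)$. The bridge between the list error probabilities and this density is Lemma~\ref{lm: VerduHan Lemma}. For the converse, I would push the feasibility assumption through that lemma and then match the resulting information-spectrum quantity against the general multi-letter formula for $C_{\bq}(\bW)$ from \cite{SomekhBaruch_general_formula_IT2015}, which extends the Verd\'u--Han capacity formula to the mismatched setting. For the direct part, I would use a codebook-replication argument analogous to the proof of the direct part of Theorem~\ref{th: main theorem}.

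\textbf{Converse (i).} Feasibility gives $\epsilon_n,\zeta_n\to 0$, so the union bound (\ref{eq: P epsilon zeta}) yields $P_{e,r}(W^{(n)},\calC_n,q_n)\to 0$. Lemma~\ref{lm: VerduHan Lemma} then rewrites this as
\[
\Pr\!\left\{-\tfrac{1}{n}\log\Phi_{q_n}(X^n,Y^n)<\tfrac{1}{n}\log r(M_n,n)-\gamma\right\}\to 0, \qquad \forall\,\gamma>0,
\]
where $X^n$ is uniform on the codebook $\calC_n$ and $Y^n$ is the corresponding channel output. Invoking the multi-letter characterization of $C_{\bq}(\bW)$ from \cite{SomekhBaruch_general_formula_IT2015}, which expresses $C_{\bq}(\bW)$ as a supremum over input processes of a $p\mbox{-}\liminf$ of this score density, and repeating the subsequence-extraction and $p\mbox{-}\liminf$-definition step used in the proof of Theorem~\ref{th: main theorem 2} (the $\gamma$-buffer being exactly what allows the passage from $\limsup$ to $p\mbox{-}\liminf$), one obtains $\rho_r-\gamma\leq C_{\bq}(\bW)$. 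Letting $\gamma\downarrow 0$ yields $\rho_r\leq C_{\bq}(\bW)$.

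\textbf{Direct (ii).} Set $\Delta=C_{\bq}(\bW)-\rho_r>0$ and pick $\epsilon\in(0,\Delta/2]$. Achievability for ordinary mismatched decoding provides a sequence of deterministic codebooks of size $\lceil 2^{n(\rho_r+\epsilon)}\rceil$ whose ordinary mismatched error $\tilde\epsilon_n$ vanishes. Since $r(M_n,n)\leq 2^{n(\rho_r+\epsilon/2)}$ eventually, I would prune each codebook down to exactly $r(M_n,n)$ codewords; removing competitors can only decrease the mismatched error, so $\tilde\epsilon_n\to 0$ is preserved. Next I would replicate each surviving codeword $L(M_n,n)=M_n/r(M_n,n)$ times, yielding a codebook of exactly $M_n$ messages, and employ the list decoder that returns the top-$L(M_n,n)$ messages ranked by $q_n$, breaking ties by message index; this satisfies (\ref{eq: decoder decision rule}) and has $\zeta_n\equiv 0$. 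Whenever the pruned ordinary mismatched decoder would have uniquely identified the transmitted codeword (a high-probability event by $\tilde\epsilon_n\to 0$), the $L(M_n,n)$ replicas of that codeword are precisely the top-$L(M_n,n)$ messages, so the transmitted index lies in the list; hence $\epsilon_n\leq\tilde\epsilon_n\to 0$.

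\textbf{Main obstacle.} The delicate ingredient is in the converse: one must combine Lemma~\ref{lm: VerduHan Lemma} with the multi-letter formula for $C_{\bq}(\bW)$ from \cite{SomekhBaruch_general_formula_IT2015} and navigate the $\limsup$/$p\mbox{-}\liminf$ interplay by passing to a subsequence along which $\tfrac{1}{n}\log r(M_n,n)\to\rho_r$. A minor technicality in the direct part is the handling of ties between the correct codeword and competitors in the pruned codebook; such ties already count as ordinary mismatched errors and therefore have vanishing probability, leaving $\epsilon_n\to 0$ intact.
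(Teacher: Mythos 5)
Your proposal matches the paper's proof in essence: the converse combines the union bound (\ref{eq: P epsilon zeta}), Lemma \ref{lm: VerduHan Lemma}, and the general multi-letter formula $C_{\bq}(\bW)=\sup_{\bP}\mbox{p-}\liminf -\frac{1}{n}\log\Phi_{q_n}(X^n,Y^n)$ (the paper phrases this as a contradiction argument rather than your direct derivation, but the substance is identical), and the direct part is the same codeword-replication construction the paper borrows from the direct part of Theorem \ref{th: main theorem}. The only nitpick is that when you prune the ordinary mismatched code down to $r(M_n,n)$ codewords you should keep those with the smallest individual error probabilities (as the paper does) so that the averaged error does not increase; otherwise the argument is sound.
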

Recall that $C_{\bq}(\bW)$ denotes the mismatch capacity of the channel $\bW$ with decoding metric sequence $\bq$. 
The theorem straightforwardly implies the following corollary.
\begin{corollary}\label{th: rho theorem mismatch}
For every channel $\bW$
\begin{flalign}
 \rho_{\bq}(\bW)=C_{\bq}(\bW). 
 \end{flalign}
\end{corollary}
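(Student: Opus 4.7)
The plan is to derive both inequalities $\rho_{\bq}(\bW)\leq C_{\bq}(\bW)$ and $\rho_{\bq}(\bW)\geq C_{\bq}(\bW)$ as immediate consequences of the two parts of Theorem \ref{th: main theorem mis}, in exact analogy with how Corollary \ref{th: rho max theorem 1} is deduced from Theorem \ref{th: main theorem}. No new machinery is needed; the statement is essentially a repackaging of the coding theorem.

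For the upper bound, I would take an arbitrary achievable normalized log-ratio $\rho$ with decoding metric sequence $\bq$. By Definition \ref{df: maximal achievable rho irtghikshfksvh} (and the definition preceding it) there exist a ratio function $r(M_n,n)$ and a feasible sequence $\{M_i\}$ such that $\limsup_{n\to\infty}\frac{1}{n}\log r(M_n,n)=\rho$. Applying part (i) of Theorem \ref{th: main theorem mis} to this pair yields $\rho=\rho_r\leq C_{\bq}(\bW)$. Since this holds for every achievable $\rho$, taking the supremum gives $\rho_{\bq}(\bW)\leq C_{\bq}(\bW)$.

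For the lower bound, I would specialize to $r(M_n,n)=M_n$, i.e., list size $1$, which reduces the ratio list decoding problem to the classical mismatched decoding problem. Fix any $R$ with $0<R<C_{\bq}(\bW)$ and set $M_n=\lceil e^{nR}\rceil$, so that $\rho_r=R$. By the direct part (ii) of Theorem \ref{th: main theorem mis}, this sequence is feasible w.r.t.\ $r(M_n,n)$ and metric sequence $\bq$ (one could equivalently invoke the definition of the mismatch capacity directly, since with $r(M_n,n)=M_n$ the second-kind error $\zeta_n$ is identically zero and only the standard mismatched error probability needs to vanish). Hence $R$ is an achievable normalized log-ratio, so $\rho_{\bq}(\bW)\geq R$, and letting $R\uparrow C_{\bq}(\bW)$ gives $\rho_{\bq}(\bW)\geq C_{\bq}(\bW)$.

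There is no genuine obstacle here; the only minor point requiring attention is the strict inequality $\rho_r<C_{\bq}(\bW)$ in the hypothesis of the direct part, which is handled by the standard trick of approaching the supremum from below. Combining the two bounds concludes the proof.
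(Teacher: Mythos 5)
Your proposal is correct and matches the paper's own (very brief) argument exactly: the paper likewise obtains the upper bound by applying the converse part of Theorem \ref{th: main theorem mis} to any achievable normalized log-ratio, and the lower bound by choosing $r(M_n,n)=M_n$ (ordinary mismatched decoding) together with the definition of $C_{\bq}(\bW)$. Your added remarks about approaching the supremum from below and about $\zeta_n$ vanishing identically when the list is exhaustive of size one are sound and merely make explicit what the paper leaves implicit.
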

We next prove Theorem \ref{th: main theorem mis}.
The converse part follows from (\ref{eq: converse part ofTh1 mis}) and the direct part follows by using the ratio function $r(M_n,n)=M_n$ (ordinary decoding) and by definition of $C_{\bq}(\bW)$.
\begin{proof}
We begin with the converse part. 
Assume in negation that $\{M_i\}_{i=1}^{\infty}$ is a feasible sequence w.r.t.\ $r(M_n,n)$ with a decoding metric $\bq$ such that for some sufficiently small $\gamma>0$ and for infinitely many $n$'s $\frac{1}{n}\log(r(M_n,n))
\geq C_{\bq}(\bW)+\gamma$. 
By the feasibility assumption and by (\ref{eq: P epsilon zeta}), there exists a sequence of codes $\calC_1,\calC_2,... $, with $|\calC_n|=M_n$ satisfying $\limsup_{n\rightarrow\infty} P_{e,r}(W^{(n)},\calC_n,q_n)=0$, 
Thus, from Lemma 
\ref{lm: VerduHan Lemma} we have for infinitely many $n$'s,
\begin{flalign}
&P_{e,r}(W^{(n)},\calC_n,q_n )
\geq 
\Pr\left\{-\frac{1}{n} \log\Phi_{q_n}(X^n,Y^n) < C_{\bq}(\bW)+\gamma  \right\}.
\label{eq: VerduHan UB second appearance}
\end{flalign}
Since the general formula for the mismatch capacity (\hspace{1sp}\cite{SomekhBaruch_general_formula_IT2015}) is given by 
\begin{flalign}
C_{\bq}(\bW)=\sup_{\bP} \mbox{p-liminf} -\frac{1}{n} \log\Phi_{q_n}(X^n,Y^n),
\end{flalign} 
(see Footnote \ref{fn: liminf in prob} for the definition of $\mbox{p-liminf}X_n$.) the r.h.s.\ of (\ref{eq: VerduHan UB second appearance}) is bounded away from zero for infinitely many $n$'s, and hence $P_{e,r}(W^{(n)},\calC_n,q_n)$ cannot vanish (and by (\ref{eq: P epsilon zeta}), neither can $\epsilon_{e,r}(W^{(n)},\calC_n,q_n)$ and $\zeta_{e,r}(W^{(n)},\calC_n,q_n)$ both be vanishing sequences) in contradiction to the assumption on the feasibility of $\{M_i\}_{i=1}^{\infty}$.

The direct part of Theorem \ref{th: main theorem mis} follows similarly to the proof of Theorem \ref{th: main theorem} (i), with the exception that now $C$ should be replaced by the mismatch capacity $C_{\bq}(\bW)$ and $\hat{S}(Y^n)$ is the output of the mismatched decoder. 

\end{proof}

\subsection{A General Formula for the Mismatch Capacity with List Decoding}\label{sc:  A General Formula for the Mismatch Capacity}

For the special case of a list size that grows no faster that exponentially with $n$; that is $L(M_n,n)= e^{n\Theta_n}$, where $\Theta_n\in[0,C]$, we define the mismatch capacity with list size exponent $\Theta_n$. 
\begin{definition}\label{df: definition 7}
A rate $R>0$ is an achievable rate for the channel $\bW$ with a decoding metric sequence $\bq$ and a list size sequence $e^{n\Theta_n}$ (where $\Theta_n\in[0,C]$)  if there exists a sequence of codes $\{\calC_n\}_{n=1}^{\infty}$ such that $\calC_n$ is an $(n,e^{nR},\epsilon_n,\zeta_n;q_n)$-code w.r.t.\ the ratio function $r(M_n,n)=M_n\cdot e^{-n\Theta}$ for the channel $W^{(n)}$ and it holds that $\lim_{n\rightarrow\infty}\epsilon_n=0$ and $\lim_{n\rightarrow\infty}\zeta_n=0$. 
\end{definition}
\begin{definition}
The capacity of the channel $\bW=\{W^{(n)}\}_{n=1}^{\infty}$ with a decoding metric sequence $\bq=q_1,q_2,... $ and a list size exponent sequence $\bTheta=\Theta_1,\Theta_2,... $ denoted $C_{\bq}(\bW,\bTheta)$, is the supremum of the achievable rates as in Definition \ref{df: definition 7}. \end{definition}

The multi-letter expression for the mismatch capacity with the list size exponent sequence $\bTheta$ is stated in the following theorem.
\begin{theorem}\label{th: General formula expression}
The mismatch $\bq$-capacity of the channel $\bW$ with a list-size exponent sequence $\bTheta$ is given by \begin{flalign}\label{eq: the General formula ksdjhfkjhsk}
C_{\bq}(\bW,\bTheta)= & \sup_{\bP} \mbox{p-}\liminf  -\frac{1}{n}\log\left(\Phi_{q_n}(X^n,Y^n)\right)+ \Theta_n
\end{flalign}
where the supremum can be restricted to sequences of distributions that are uniform over their supports. 
\end{theorem}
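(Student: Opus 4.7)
The plan is to use Lemma \ref{lm: VerduHan Lemma} exactly as in the proof of Theorem \ref{th: main theorem mis}, but specialized to the exponential list-size regime via the choice $r(M_n,n) = M_n \cdot e^{-n\Theta_n}$. Writing $R = \frac{1}{n}\log M_n$, this substitution converts the identity of Lemma \ref{lm: VerduHan Lemma} into
\begin{flalign}\label{eq: plan key identity}
P_{e,r}(W^{(n)}, \calC_n, q_n) = \Pr\left\{ -\frac{1}{n}\log \Phi_{q_n}(X^n, Y^n) + \Theta_n < R \right\},
\end{flalign}
where $X^n$ is uniform on the codebook $\calC_n$. This identity will be the backbone of both directions.

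For the converse, suppose the rate $R$ is achievable in the sense of Definition \ref{df: definition 7}; then there is a sequence of $(n, e^{nR}, \epsilon_n, \zeta_n; q_n)$-codes with $\epsilon_n, \zeta_n \to 0$, so by (\ref{eq: P epsilon zeta}) and the union bound $P_{e,r}(W^{(n)}, \calC_n, q_n) \leq \epsilon_n + \zeta_n \to 0$. Feeding this into (\ref{eq: plan key identity}) and invoking the definition of $\mbox{p-}\liminf$ (footnote \ref{fn: liminf in prob}) forces
$R \leq \mbox{p-}\liminf\left( -\frac{1}{n}\log\Phi_{q_n}(X^n,Y^n) + \Theta_n \right)$
for the codebook-induced input distribution $\bP$, which is by construction uniform on its support. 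Taking the supremum over all uniform-support $\bP$, together with the uniform-support reduction established in \cite{SomekhBaruch_general_formula_IT2015}, yields the required upper bound on $C_{\bq}(\bW, \bTheta)$ and simultaneously substantiates the ``supremum can be restricted'' assertion.

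For the direct part, fix $R$ strictly less than the right-hand side of (\ref{eq: the General formula ksdjhfkjhsk}) and pick a uniform-support sequence $\bP^*$ achieving the inner $\mbox{p-}\liminf$ up to a gap $2\gamma>0$. Generate a random codebook $\calC_n$ of $M_n = e^{nR}$ codewords drawn i.i.d.\ from $P^{*(n)}$ and employ a fixed list-size decoder with $|L_n(y^n)| \equiv \lceil e^{n\Theta_n} \rceil$; this automatically enforces $\zeta_n = 0$ and collapses the error of the first kind to $P_{e,r}$. For any $(x^n, y^n)$, the number $N(x^n,y^n)$ of other codewords whose metric is at least $q_n(x^n, y^n)$ has mean at most $M_n\, \Phi_{q_n}(x^n,y^n)$, so Markov's inequality gives
\begin{flalign}
\Pr\left\{ N(x^n,y^n) + 1 > \lceil e^{n\Theta_n} \rceil \mid X^n = x^n, Y^n = y^n \right\} \leq \min\left\{ 1,\, e^{-n\left( -\frac{1}{n}\log\Phi_{q_n}(x^n,y^n) + \Theta_n - R \right)} \right\}.
\end{flalign}
Averaging over $(X^n, Y^n) \sim P^{*(n)} W^{(n)}$ and splitting according to whether $-\frac{1}{n}\log\Phi_{q_n}(X^n,Y^n) + \Theta_n$ falls below $R + \gamma$ (vanishing probability by the choice of $\bP^*$) or not (where the exponent is at least $\gamma$, giving a bound $\leq e^{-n\gamma}$) shows that the ensemble-averaged $P_{e,r}$ tends to zero; hence some deterministic codebook attains $\epsilon_n \to 0$ with $\zeta_n=0$, establishing achievability of $R$.

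The principal delicate point, aside from the uniform-support reduction imported from \cite{SomekhBaruch_general_formula_IT2015}, is the way the bounded deterministic sequence $\Theta_n \in [0,C]$, which need not converge, is absorbed inside the $\mbox{p-}\liminf$. Because $\Theta_n$ is deterministic, the tail probability $\Pr\{-\frac{1}{n}\log\Phi_{q_n}(X^n,Y^n) + \Theta_n < R + \gamma\}$ is by definition the left tail of a shifted random sequence, and it vanishes whenever $R+\gamma$ lies strictly below the $\mbox{p-}\liminf$ of the shifted sequence; this is precisely the mechanism that transmits the hypothesis $R < \sup_{\bP}\mbox{p-}\liminf(\cdot)$ into a vanishing ensemble error bound, closing the argument.
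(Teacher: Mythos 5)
Your proof is correct, but it is not the route the paper takes. The paper disposes of Theorem \ref{th: General formula expression} in one line: it is presented as a special case of Corollary \ref{th: rho theorem mismatch} (the identity $\rho_{\bq}(\bW)=C_{\bq}(\bW)$), obtained by substituting $\rho_{\bq}(\bW)$ with $C_{\bq}(\bW,\bTheta)-\liminf_{n\rightarrow\infty}\Theta_n$ and invoking $C_{\bq}(\bW)=\sup_{\bP}\mbox{p-}\liminf\left(-\frac{1}{n}\log\Phi_{q_n}(X^n,Y^n)\right)$ from \cite{SomekhBaruch_general_formula_IT2015}; in particular, the paper's achievability is inherited from the direct part of Theorem \ref{th: main theorem mis}, which replicates each codeword of an ordinary mismatch-capacity-achieving code $M_n/r(M_n,n)$ times, not a fresh random-coding argument. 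You instead prove both directions from scratch: the converse by specializing Lemma \ref{lm: VerduHan Lemma} to $r(M_n,n)=M_n e^{-n\Theta_n}$ and reading off the $\mbox{p-}\liminf$ (the same mechanism as the paper's converse of Theorem \ref{th: main theorem mis}, applied directly), and the direct part by i.i.d.\ random coding plus Markov's inequality on the number of metric-beating codewords. What your version buys is fidelity to the statement when $\Theta_n$ does not converge: the formula places $\Theta_n$ inside the $\mbox{p-}\liminf$, and in general $\mbox{p-}\liminf(Z_n+\Theta_n)$ can strictly exceed $\mbox{p-}\liminf(Z_n)+\liminf_{n}\Theta_n$, so the paper's substitution only directly delivers the a priori weaker expression $C_{\bq}(\bW)+\liminf_n\Theta_n$, whereas your two bounds match the stated right-hand side exactly --- you correctly flag this as the delicate point. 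What the paper's route buys is brevity given the machinery already in place. Two minor remarks: take the fixed list size to be $\lfloor e^{n\Theta_n}\rfloor$ rather than $\lceil e^{n\Theta_n}\rceil$ (otherwise $\zeta_n=1$ whenever $e^{n\Theta_n}$ is not an integer; since the relevant count is integer-valued this changes nothing in the error bound), and note that your converse by itself already yields the uniform-support restriction, so the appeal to \cite{SomekhBaruch_general_formula_IT2015} for that particular point is unnecessary.
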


This is a special case of Corollary \ref{th: rho theorem mismatch}, and thus follows straightforwardly from it\footnote{See also the proof of this special case in \cite[Theorem 1]{SomekhBaruch_list_ISIT2015}.}; simply   
substitute $\rho_{\bq}(\bW)$ by $C_{\bq}(\bW,\bTheta)-\liminf_{n\rightarrow\infty}\Theta_n$ and note that 

\noindent $C_{\bq}(\bW)= \sup_{\bP} p\mbox{-}\liminf  -\frac{1}{n}\log\left(\Phi_{q_n}(X^n,Y^n)\right)$. 

The following corollary is therefore the counterpart of Corollary \ref{th: c eq c plus theta} to the mismatched decoding setup. \begin{corollary}\label{cr: coroallydfkjv}
If $\forall n, \Theta_n=\Theta$, one has
\begin{flalign}\label{eq: the General formula ksdjhfkjhsk kejahsfjk}
C_{\bq}(\bW,\Theta)=  C_{\bq}(\bW)  + \Theta.
\end{flalign}
where $C_{\bq}(\bW)$ is the mismatch capacity, and when $\liminf_{n\rightarrow \infty}\Theta_n=0$, one has
\begin{flalign}\label{eq: the General formula ksdjhfkjhsk}
C_{\bq}(\bW,\bTheta)=  C_{\bq}(\bW).
\end{flalign}
\end{corollary}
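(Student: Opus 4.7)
The plan is to derive Corollary \ref{cr: coroallydfkjv} as a direct consequence of Corollary \ref{th: rho theorem mismatch} (equivalently, Theorem \ref{th: main theorem mis}). The translation between the two frameworks is straightforward: a block code of rate $R = \frac{1}{n}\log M_n$ with desired list size $L_n = e^{n\Theta_n}$ corresponds exactly to the ratio function $r(M_n,n) = M_n / L_n = e^{n(R - \Theta_n)}$, whose normalized log-growth is
\[
\rho_r = \limsup_{n\to\infty} \frac{1}{n}\log r(M_n,n) = R - \liminf_{n\to\infty} \Theta_n.
\]

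For part (a), where $\Theta_n \equiv \Theta$ is constant, this reduces to $\rho_r = R - \Theta$. The converse direction of Theorem \ref{th: main theorem mis} (applied through Corollary \ref{th: rho theorem mismatch}) then forces any feasible rate $R$ to satisfy $R - \Theta \le C_\bq(\bW)$, giving the upper bound $C_\bq(\bW,\Theta) \le C_\bq(\bW) + \Theta$. Conversely, for any $R$ with $\Theta < R < C_\bq(\bW) + \Theta$ we have $0 < \rho_r < C_\bq(\bW)$, so the direct part of Theorem \ref{th: main theorem mis} supplies a feasible code sequence; the degenerate range $R \le \Theta$ is trivially achievable by declaring the entire codebook as the list. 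Taking suprema establishes $C_\bq(\bW,\Theta) = C_\bq(\bW) + \Theta$.

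For part (b), where $\liminf_{n\to\infty}\Theta_n = 0$, the same identity yields $\rho_r = R$. The converse of Theorem \ref{th: main theorem mis} then gives $R \le C_\bq(\bW)$ for every feasible rate, hence $C_\bq(\bW,\bTheta) \le C_\bq(\bW)$. The matching direction $C_\bq(\bW,\bTheta) \ge C_\bq(\bW)$ is immediate: any sequence of classical mismatched codes (list size $1$) achieving rate $R$ is automatically an admissible sequence of list-decoding codes, since the actual list size satisfies $|L_n(Y^n,q_n)| \le 1 \le e^{n\Theta_n}$ whenever $\Theta_n \ge 0$, so both $\epsilon_n$ and $\zeta_n$ vanish.

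I do not anticipate any real obstacle. The only point requiring mild care is ensuring the direct part of Theorem \ref{th: main theorem mis} is applied with $\rho_r > 0$, which is handled by taking $R$ strictly larger than $\Theta$ (or using the trivial full-codebook list in the degenerate regime). The crux of the argument is purely the bookkeeping identity $\rho_r = R - \liminf_n \Theta_n$, which translates the list-size parametrization into the ratio-function parametrization for which Corollary \ref{th: rho theorem mismatch} already provides a complete converse and direct characterization.
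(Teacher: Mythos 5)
Your proof is correct and follows essentially the same route as the paper: the paper obtains this corollary from Theorem \ref{th: General formula expression}, which it in turn derives from Corollary \ref{th: rho theorem mismatch} by exactly the substitution $\rho_r = R - \liminf_n \Theta_n$ that you use, so you are simply short-circuiting the intermediate multi-letter formula. Your explicit handling of the degenerate range $R \le \Theta$ and of the achievability direction in the $\liminf_n\Theta_n = 0$ case (via ordinary list-size-one codes) is a welcome bit of added care that the paper leaves implicit.
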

Corollary \ref{cr: coroallydfkjv} implies that for every channel and every metric sequence, if the list size grows sub-exponentially, the list decoding capacity is equal to the ordinary capacity \cite{SomekhBaruch_general_formula_IT2015}.

\subsection{Properties of the Average Error Probability in Mismatched List Decoding with Constant List Size}\label{sc: A Tight Expression for the Average Error Probability}

Thus far, we have studied achievable rates for list decoding. In this section we present results that concern the average probability of error in list decoding with or without mismatch. 

The exact error probability depends on the actual decision rule that determines the list $L(y^n,q_n)$, and in particular whether or not $L(y^n,q_n)$ is constant for all $y^n$ or determined by a threshold level on the metric value. Thus, in this section we confine attention to the case of a constant list size that is
\begin{flalign}
L(y^n,q_n)=e^{n\Theta_n}, \; \forall y^n\in\calY^n
\end{flalign}
in which case the average probability of error is given by 
\begin{flalign}
P_e^{\Theta_n}(W^{(n)},\calC_n,q_n)\triangleq \left. P_{e,r}(W^{(n)},\calC_n,q_n)\right|_{r(M_n,n)=M_ne^{-n\Theta_n}},
\end{flalign}
where $P_{e,r}(W^{(n)},\calC_n,q_n)$ is defined in (\ref{eq: P e r}). 

\subsubsection{A Tight Expression for the Average Error Probability with Mismatched List Decoding}
Let $\calG_n(R)$ be the set of codebooks of block length $n$ and rate $R$. Denote the infimum of the achievable average error probability with equiprobable codewords at rate $R$ and block length $n$ by
\begin{flalign}
\calE_{q_n}^{(n)}(R)=&\inf_{\calC_n\in\calG_n(R)}\epsilon(W^{(n)},\calC_n,q_n),
\end{flalign}
where $\epsilon(W^{(n)},\calC_n,q_n)$ denotes the average probability of error for ordinary (list size equals one) mismatched decoding.

 Further, denote the equivalent quantity for list decoding of list-size $e^{n\Theta_n}$ \begin{flalign}
\calE_{q_n}^{(n)}(R,\Theta_n)=&\inf_{\calC_n\in\calG_n(R)}P_e^{\Theta_n}(W^{(n)},\calC_n,q_n).
\end{flalign}

Let $\calP_n(R)$ be the set of distributions which are uniform over a subset of $\calX^n$ whose size is $e^{nR}$. 
For the sake of convenience we use the abbreviation for $P^{(n)}$ the distribution of $X^n$ 
\begin{flalign}
\Phi_{q_n}(P^{(n)})\triangleq \Phi_{q_n}(X^n,Y^n). 
\end{flalign}
\noindent From Lemma \ref{lm: VerduHan Lemma} we obtain the following straightforward result.
\begin{theorem}
For all $R,\Theta_n$
\begin{flalign}\label{eq: VerduHan UB}
&\calE_{q_n}^{(n)}(R,\Theta_n)
=
\inf_{P^{(n)}\in\calP_n(R)}\Pr\left\{-\frac{1}{n} \log \left(\Phi_{q_n}(P^{(n)})\right) < R-\Theta_n \right\}.
\end{flalign}
\end{theorem}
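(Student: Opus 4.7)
The statement is essentially a direct corollary of Lemma~\ref{lm: VerduHan Lemma}, applied to the specific ratio function that yields a constant list size $e^{n\Theta_n}$, followed by taking the infimum over all rate-$R$ codebooks. My plan is as follows.

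First, I would unfold the definitions. Setting $M_n = e^{nR}$ and $r(M_n,n) = M_n e^{-n\Theta_n}$, one has
\begin{flalign*}
\frac{1}{n}\log r(M_n,n) = R - \Theta_n,
\end{flalign*}
and by definition, $P_e^{\Theta_n}(W^{(n)},\calC_n,q_n)$ is simply $P_{e,r}(W^{(n)},\calC_n,q_n)$ with this choice of $r$. Lemma~\ref{lm: VerduHan Lemma} then immediately yields, for any $\calC_n \in \calG_n(R)$,
\begin{flalign*}
P_e^{\Theta_n}(W^{(n)},\calC_n,q_n) = \Pr\left\{-\frac{1}{n}\log \Phi_{q_n}(X^n,Y^n) < R - \Theta_n\right\},
\end{flalign*}
where $X^n$ is uniformly distributed over the codewords of $\calC_n$ and $Y^n$ is the corresponding channel output.

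Second, I would take the infimum over codebooks $\calC_n \in \calG_n(R)$ on both sides. The key observation is that the right-hand side depends on $\calC_n$ only through the distribution $P^{(n)}$ of $X^n$, which by construction is uniform over the codebook (a set of size $e^{nR}$ in $\calX^n$). Hence, as $\calC_n$ ranges over $\calG_n(R)$, the induced $P^{(n)}$ ranges precisely over $\calP_n(R)$ (uniform distributions supported on subsets of $\calX^n$ of cardinality $e^{nR}$). Two codebooks inducing the same uniform distribution (i.e., differing only by reordering or by containing repeated codewords consolidated to the same support) give the same probability on the right-hand side, so the bijection between the underlying sets is enough to equate the two infima.

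Combining the two steps gives
\begin{flalign*}
\calE_{q_n}^{(n)}(R,\Theta_n) = \inf_{\calC_n \in \calG_n(R)} P_e^{\Theta_n}(W^{(n)},\calC_n,q_n) = \inf_{P^{(n)} \in \calP_n(R)} \Pr\left\{-\frac{1}{n}\log \Phi_{q_n}(P^{(n)}) < R - \Theta_n\right\},
\end{flalign*}
which is the claimed identity. There is no genuine obstacle here; the only mild point requiring care is the correspondence between rate-$R$ codebooks and uniform distributions in $\calP_n(R)$, which is handled by the observation that the error probability in question is invariant to the ordering of codewords and that allowing repeated codewords in $\calC_n$ cannot decrease $P_e^{\Theta_n}$ below what is achievable with distinct codewords, so the infimum is attained (in the relevant sense) on distributions supported on genuine subsets of size $e^{nR}$.
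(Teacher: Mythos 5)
Your proposal is correct and matches the paper's reasoning: the paper states this theorem as an immediate consequence of Lemma~\ref{lm: VerduHan Lemma} specialized to $r(M_n,n)=M_n e^{-n\Theta_n}$, which is exactly the route you take, with the codebook-to-uniform-distribution correspondence spelled out. No gaps.
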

As a special case, the following theorem follows (it is also a straightforward result of \cite[Lemma 1]{SomekhBaruch_general_formula_IT2015}). 
\begin{theorem}
For all $R$
\begin{flalign}\label{eq: VerduHan UB}
\calE_{q_n}^{(n)}(R)=
\inf_{P^{(n)}\in\calP_n(R)}\Pr\left\{-\frac{1}{n} \log \left(\Phi_{q_n}(P^{(n)})\right) < R \right\}.
\end{flalign}
\end{theorem}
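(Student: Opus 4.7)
The plan is to derive this as the specialization of the preceding theorem to $\Theta_n=0$, which corresponds to a list of size $e^{n\cdot 0}=1$ and hence to ordinary (non-list) mismatched decoding. Concretely, I would apply Lemma~\ref{lm: VerduHan Lemma} with the ratio function $r(M_n,n)=M_n=e^{nR}$ to conclude that, for every codebook $\calC_n\in\calG_n(R)$,
\begin{flalign*}
P_{e,r=M_n}(W^{(n)},\calC_n,q_n)=\Pr\left\{-\tfrac{1}{n}\log\Phi_{q_n}(X^n,Y^n)<R\right\},
\end{flalign*}
where $X^n$ is the random variable uniformly distributed over $\calC_n$.

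The next step is to verify that $P_{e,r=M_n}(W^{(n)},\calC_n,q_n)$ coincides with the ordinary average error probability $\epsilon(W^{(n)},\calC_n,q_n)$. By the definition (\ref{eq: P e r}) with $r=M_n$, the event in question is that the set $\{\bx'\in\calC_n:q_n(\bx',Y^n)\geq q_n(X^n,Y^n)\}$ has cardinality strictly greater than $1$. Since $X^n$ itself always belongs to this set, the event reduces to the existence of another codeword whose metric is at least as large as that of the transmitted one. This is precisely the error event of the classical mismatched decoder (\ref{eq: mismatched decoder classical output}), under the convention that a non-unique argmax is declared an error.

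Finally, I would take the infimum of both sides over $\calC_n\in\calG_n(R)$ and match it against the infimum over $\calP_n(R)$ on the right-hand side. The right-hand side depends on $\calC_n$ only through the uniform distribution of its codewords (via $\Phi_{q_n}$), and every codebook in $\calG_n(R)$ with distinct codewords induces such a distribution in $\calP_n(R)$. The one subtle point is that $\calG_n(R)$ formally admits codebooks with repeated codewords while $\calP_n(R)$ does not; however, any repetition forces an additional deterministic tie in the argmax and can only increase $\epsilon(W^{(n)},\calC_n,q_n)$, so restricting the infimum to codebooks with distinct codewords leaves it unchanged. This completes the identification and yields the claimed identity.
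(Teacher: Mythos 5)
Your proposal is correct and follows essentially the same route as the paper: the paper obtains this theorem simply as the special case $\Theta_n=0$ (equivalently $r(M_n,n)=M_n$) of the preceding list-decoding theorem, i.e., of Lemma~\ref{lm: VerduHan Lemma}, which is exactly what you do. The only difference is that you make explicit the identification of the event $|\{\bx'\in\calC_n:q_n(\bx',Y^n)\geq q_n(X^n,Y^n)\}|>1$ with the classical tie-as-error event of the decoder (\ref{eq: mismatched decoder classical output}) and the matching of $\calG_n(R)$ with $\calP_n(R)$, details the paper leaves implicit.
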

The following inequality holds:
\begin{lemma}
For all $R,n$, 
\begin{flalign}
\calE_{q_n}^{(n)}(R+\Theta_n,\Theta_n)\leq  \calE_{q_n}^{(n)}(R).
\end{flalign}
\end{lemma}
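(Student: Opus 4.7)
The plan is to embed any rate-$R$ codebook into the class $\calG_n(R+\Theta_n)$ by replicating each codeword $L_n = e^{n\Theta_n}$ times, and to show that the list-decoding error (with list size $L_n$) of the replicated codebook equals exactly the ordinary mismatched decoding error of the original. Taking the infimum over the rate-$R$ codebook then gives the inequality, because the infimum defining $\calE_{q_n}^{(n)}(R+\Theta_n,\Theta_n)$ is over the larger class $\calG_n(R+\Theta_n)$ and is therefore upper bounded by the error of any specific element of that class.

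Concretely, I would fix $\calC_n=\{x^n(1),\ldots,x^n(M_n)\}\in\calG_n(R)$ with distinct codewords (which is WLOG for $\calE_{q_n}^{(n)}(R)$ by the preceding Theorem that restricts the infimum to $\calP_n(R)$), and define the indexed multiset $\calC_n'=\{\tilde x^n(k,j):k\in\{1,\ldots,M_n\},\,j\in\{1,\ldots,L_n\}\}$ with $\tilde x^n(k,j)=x^n(k)$. Then $|\calC_n'|=M_n L_n=e^{n(R+\Theta_n)}$, so $\calC_n'\in\calG_n(R+\Theta_n)$, and its associated ratio function for list size $L_n$ is $r(M_n',n)=M_n'/L_n=M_n$.

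The two elementary observations to exploit are: (i) drawing $X^n$ uniformly over $\calC_n'$ induces the same distribution on $\calX^n$ as drawing $X^n$ uniformly over $\calC_n$, since each distinct codeword appears the same number $L_n$ of times; and (ii) because all $L_n$ replicas of an original codeword share the same metric with $Y^n$, one has the identity
\begin{flalign*}
|\{\bx'\in\calC_n':q_n(\bx',Y^n)\ge q_n(X^n,Y^n)\}|=L_n\cdot|\{\bx\in\calC_n:q_n(\bx,Y^n)\ge q_n(X^n,Y^n)\}|.
\end{flalign*}
Applying Lemma \ref{lm: VerduHan Lemma} to $\calC_n'$, the event $\{|\{\bx'\in\calC_n':q_n(\bx',Y^n)\ge q_n(X^n,Y^n)\}|>L_n\}$ is therefore equivalent to the event $\{|\{\bx\in\calC_n:q_n(\bx,Y^n)\ge q_n(X^n,Y^n)\}|>1\}$. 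Since $X^n\in\calC_n$ contributes $1$ to the latter count, this is precisely the ordinary mismatched decoding error event for $\calC_n$ (ties declared as errors). Thus $P_e^{\Theta_n}(W^{(n)},\calC_n',q_n)=\epsilon(W^{(n)},\calC_n,q_n)$, and taking the infimum over $\calC_n\in\calG_n(R)$ yields the claim.

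No step is truly hard: the construction is combinatorial, and the metric-preservation argument is immediate. The only mild subtlety is the standard integer-rounding issue ($e^{n\Theta_n}$ and $e^{nR}$ need not be integers), handled by the usual ceiling convention without affecting the inequality; and one should verify that restricting the infimum $\calE_{q_n}^{(n)}(R)$ to codebooks with distinct codewords is WLOG, which follows from the characterization given in the theorem preceding the lemma.
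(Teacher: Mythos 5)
Your proposal is correct and follows essentially the same route as the paper: the paper's proof is precisely the replication argument, creating a codebook of size $e^{n(R+\Theta_n)}$ consisting of $e^{nR}$ groups of $e^{n\Theta_n}$ identical codewords and observing that its list-decoding error with list size $e^{n\Theta_n}$ equals the ordinary mismatched decoding error of the original codebook. Your additional details (the factor-of-$L_n$ count identity, the induced uniform distribution, and the treatment of ties) merely make explicit what the paper leaves as a ``simple observation.''
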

Similar to the proof of the direct part of Theorem \ref{th: General formula expression}, the lemma follows by the simple observation that given a codebook $\calC_n$ of size $e^{nR}$ one can create a codebook $\calC_n'$ of size $e^{ n(R+\Theta_n)}$ containing $e^{nR}$ sets of identical $e^{n\Theta_n}$ codewords, and that $\epsilon(W^{(n)},\calC_n,q_n)=P_e^{\Theta_n}(W^{(n)},\calC_n,q_n)$. 

\subsubsection{A Random Coding Lower Bound on the Average Probability of Error with Mismatched List Decoding}\label{sc: A Random Coding Lower Bound}
The next result is an upper bound on the average error probability obtained by random coding. 
\begin{theorem}\label{th: random coding list bound}
For all $R,\Theta_n$
\begin{flalign}
&\calE_{q_n}^{(n)}(R,\Theta_n)\leq   \inf_{P^{(n)}\in\calP(\calX^n)} \bigg[\mathbb{E}\Big[ e^{-e^{nR}\cdot D\left(e^{-n(R-\Theta_n)}\| \Phi_{q_n}(P^{(n)}) \right)} \nonumber\\
& \hspace{4cm}
\times 1\{\Phi_{q_n}(P^{(n)}) < e^{-n(R-\Theta_n)}\}\Big]
\mathbb{E}\left[1\{\Phi_{q_n}(P^{(n)})\geq e^{-n(R-\Theta_n)}\}\right]\bigg].
\end{flalign}
\end{theorem}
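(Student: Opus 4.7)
The plan is to use a standard random coding argument with an i.i.d.\ codebook distribution $P^{(n)}\in\calP(\calX^n)$, and to bound the conditional probability of error given the transmitted codeword and channel output via a Chernoff--Hoeffding upper tail bound on a binomial random variable, splitting into two cases according to whether $\Phi_{q_n}(P^{(n)})$ is below or above the critical threshold $e^{-n(R-\Theta_n)}$.

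\textbf{Step 1: Random code construction and reduction.} I would generate the $M_n=e^{nR}$ codewords of $\calC_n$ i.i.d.\ according to $P^{(n)}$. Fix an arbitrary message as transmitted (by symmetry it can be taken to be message $1$), denote its codeword by $X^n$, and let $Y^n$ be the channel output. Conditioned on $(X^n, Y^n)$, the other $M_n-1$ codewords $\widetilde X^n_2,\ldots,\widetilde X^n_{M_n}$ are i.i.d.\ $P^{(n)}$, and each independently satisfies $q_n(\widetilde X^n_j,Y^n)\geq q_n(X^n,Y^n)$ with probability exactly $\Phi_{q_n}(X^n,Y^n)$ by definition of $\Phi_{q_n}$. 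Hence, conditionally on $(X^n,Y^n)$, the number $N$ of competing codewords whose metric is at least as large as the transmitted one is distributed as $\mathrm{Bin}(M_n-1,\Phi_{q_n}(X^n,Y^n))$, and the event $\{\text{error}\}$ of (\ref{eq: P e r}) is the event $\{N+1>e^{n\Theta_n}\}$, which is contained in $\{\mathrm{Bin}(M_n,\Phi_{q_n})\geq e^{n\Theta_n}\}$.

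\textbf{Step 2: Chernoff--Hoeffding split into two cases.} Let $\Phi\triangleq \Phi_{q_n}(X^n,Y^n)$ and $k=e^{n\Theta_n}$, $M=M_n=e^{nR}$, so that $k/M = e^{-n(R-\Theta_n)}$. If $\Phi<e^{-n(R-\Theta_n)}$, then the mean $M\Phi$ of the binomial lies below $k$, and the standard Chernoff bound for the upper tail of a binomial gives
\begin{flalign}
\Pr\bigl\{\mathrm{Bin}(M,\Phi)\geq k\,\bigl|\,X^n,Y^n\bigr\}\;\leq\;\exp\!\Bigl(-M\,D\bigl(k/M\,\|\,\Phi\bigr)\Bigr)=\exp\!\Bigl(-e^{nR}\,D\bigl(e^{-n(R-\Theta_n)}\,\|\,\Phi\bigr)\Bigr).\nonumber
\end{flalign}
If instead $\Phi\geq e^{-n(R-\Theta_n)}$, I simply bound the conditional error probability by $1$. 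Combining these two cases yields the pointwise (in $(X^n,Y^n)$) inequality
\begin{flalign}
\Pr\{\text{error}\mid X^n,Y^n\}\;\leq\; e^{-e^{nR}D(e^{-n(R-\Theta_n)}\|\,\Phi_{q_n}(P^{(n)}))}\mathbf{1}\{\Phi_{q_n}(P^{(n)})<e^{-n(R-\Theta_n)}\}+\mathbf{1}\{\Phi_{q_n}(P^{(n)})\geq e^{-n(R-\Theta_n)}\}.\nonumber
\end{flalign}

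\textbf{Step 3: Taking expectations and optimizing.} Averaging over $(X^n,Y^n)\sim P^{(n)}W^{(n)}$ (equivalently, over the random codebook and channel), the left-hand side becomes the ensemble-averaged error probability. Since the minimum over deterministic codebooks is at most the ensemble average, there exists a deterministic $\calC_n$ with $|\calC_n|=M_n$ achieving the stated bound, and the resulting inequality can then be tightened by taking the infimum over $P^{(n)}\in\calP(\calX^n)$. A minor sanity step is to observe that nothing in the argument requires $P^{(n)}$ to be uniform over a subset of $\calX^n$; the random-coding construction with repetitions only inflates $N$, but this is already absorbed by the binomial bound (which used $M_n$ in place of $M_n-1$). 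The main, though mild, obstacle is the cleanest justification of the Chernoff bound in the regime $\Phi<e^{-n(R-\Theta_n)}$ and the verification that the two-case bound indeed dominates the true conditional error probability uniformly; this is just the standard fact that $\Pr\{\mathrm{Bin}(M,p)\geq k\}\leq e^{-MD(k/M\|p)}$ whenever $k/M\geq p$, which I would invoke without reproof.
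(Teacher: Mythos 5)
Your proof is correct and follows essentially the same route as the paper: both reduce the conditional error probability, given $(X^n,Y^n)$, to the upper tail of a binomial with success probability $\Phi_{q_n}(X^n,Y^n)$ and bound it by $\exp\{-M_n D(e^{-n(R-\Theta_n)}\|\Phi_{q_n})\}$ when $\Phi_{q_n}$ is below the threshold and by $1$ otherwise (the paper simply imports this two-case binomial bound from Merhav's equations (8)--(18) instead of rederiving it via Chernoff--Hoeffding). Note that your final bound, like the last line of the paper's own proof, is the \emph{sum} of the two expectations, so the product appearing in the displayed theorem statement is evidently a typographical slip rather than a discrepancy in your argument.
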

\begin{proof}
Consider the chain of inequalities that was derived in \cite[Equations (8)-(18)]{Merhav_List_Decoding_IT_2014}, which can be phrased as follows: 
for all positive integers $M\geq L$ and $\Phi\in[0,1]$, 
\begin{flalign}
&\sum_{k=L}^M {M \choose k} \cdot \Phi^k\cdot\left[1-\Phi \right]^{M-k} \nonumber\\
&\leq 1\left\{\Phi \geq L/(M-1)\right\}
+  \exp\left\{-MD\left(L/M\| \Phi \right)\right\}\cdot 1\left\{\Phi<L/(M-1)\right\}\nonumber\\
&\leq  \exp\left\{-L\left[\ln (L)-\ln(M\Phi) -1\right]_+\right\},
\end{flalign}
where $D(p\|q)$ is the binary divergence 
and $|t|_+=\max\{0,t\}$. 
Denote the random variable $Z_i= 1\{q_n(\tilde{X}^n(i),Y^n)\geq q_n(X^n,Y^n)\}$ where $\tilde{X}^n(i)$ is the random $i$-th codeword and $X^n$ is the transmitted one. 
Similar to the derivation in \cite{Merhav_List_Decoding_IT_2014} we obtain that the random coding average probability of error $\calE_{RC, q_n}^{(n)}(R,\Theta_n)$ achieved when the codewords are drawn i.i.d.\ $P^{(n)}$ is upper bounded as follows:
\begin{flalign}
&\calE_{RC, q_n}^{(n)}(R,\Theta_n)\nonumber\\
&= 
\mathbb{E}\left\{\Pr\left\{\left.\sum_{i=2}^{e^{nR}}Z_i \geq e^{n\Theta_n}\right|X^n,Y^n\right\}\right\}\nonumber\\
&= \mathbb{E}\bigg[\sum_{k=e^{n\Theta_n}}^{M_n} {M_n \choose k} \left[\Phi_{q_n}(P^{(n)})\right]^k\cdot \left[1-\Phi_{q_n}(P^{(n)}) \right]^{M_n-k}\bigg] \nonumber\\
&\leq   \mathbb{E}\bigg[ e^{-M_nD\left(e^{-n(R-\Theta_n)}\| \Phi_{q_n}(P^{(n)}) \right)} 
\times1\{\Phi_{q_n}(P^{(n)})< e^{-n(R-\Theta_n)}\}\bigg]
+  \mathbb{E}\left[1\{\Phi_{q_n}(P^{(n)})\geq e^{-n(R-\Theta_n)}\}\right].
\end{flalign}
\end{proof}
In the matched DMC case, this bound was shown to prove the tightness of the Shannon-Gallager-Berlekamp bound \cite{ShannonGallagerBerlekamp1967}
across the relevant
range of rates, $(\Theta,C + \Theta)$, where $C$ is the channel capacity. It would be interesting to see whether this result has an equivalent in the mismatched setup.

\subsubsection{A Lower Bound on the Average Probability of Error in Mismatched List Decoding for Rates Above Capacity}\label{sc: An Equivalent of Fano}
As a result of Fano's Inequality, in the matched DMC case it holds that (see, e.g., \cite[Eq. (7.103)]{CoverThomas2006})
\begin{flalign}
P_e^{(n)}\geq 1-\frac{C}{R}-\frac{1}{nR},
\end{flalign} 
where $C$ is the channel capacity and $P_e^{(n)}$ is the average probability of error obtained by a code of rate $R$ and a maximum likelihood decoder. 
Consequently, 
\begin{flalign}\label{eq: liming Fano}
\liminf_{n\rightarrow \infty}P_e^{(n)}\geq 1-\frac{C}{R}.
\end{flalign}
We present the following related result. For the simplicity of the presentation, in this section we consider a list size exponent $\Theta$ that does not depend on $n$.  
\begin{theorem}\label{th: Fano Theorem}
For every channel $W^{(n)}$, codebook $\calC_n$ and a metric $q_n$
\begin{flalign}\label{eq: skjbvkjbh}
&P_e^{\Theta}(W^{(n)},\calC_n,q_n)\geq 1-\frac{1}{R-\Theta}\mathbb{E}\left\{-\frac{1}{n}\log \Phi_{q_n}(P^{(n)}) \right\},
\end{flalign}
where $P^{(n)}$ is uniform over $\calC_n$. 
\end{theorem}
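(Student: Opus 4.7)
The plan is to derive this bound as a direct application of Markov's inequality to the random variable that appears inside Lemma \ref{lm: VerduHan Lemma}, which is the natural mismatched analogue of how Fano's inequality yields (\ref{eq: liming Fano}) in the matched setting.

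First, I would invoke Lemma \ref{lm: VerduHan Lemma} with the list-size ratio $r(M_n,n)=M_ne^{-n\Theta}$, so that $\frac{1}{n}\log r(M_n,n)=R-\Theta$, giving the identity
\begin{flalign*}
P_e^{\Theta}(W^{(n)},\calC_n,q_n)
=\Pr\!\left\{-\tfrac{1}{n}\log\Phi_{q_n}(P^{(n)}) < R-\Theta\right\},
\end{flalign*}
where $P^{(n)}$ is uniform over $\calC_n$. Define the nonnegative random variable $Z\triangleq -\tfrac{1}{n}\log\Phi_{q_n}(P^{(n)})$; nonnegativity follows because $\Phi_{q_n}(P^{(n)})\in(0,1]$ as it is a probability (and it is strictly positive since $X^n$ itself belongs to the comparison set, giving $\Phi_{q_n}\geq 1/M_n$).

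Next I would apply Markov's inequality to the nonnegative random variable $Z$ at the threshold $R-\Theta>0$:
\begin{flalign*}
\Pr\!\left\{Z\geq R-\Theta\right\}\;\leq\;\frac{\mathbb{E}[Z]}{R-\Theta}
\;=\;\frac{1}{R-\Theta}\,\mathbb{E}\!\left\{-\tfrac{1}{n}\log\Phi_{q_n}(P^{(n)})\right\}.
\end{flalign*}
Taking complements and substituting back into the identity from Lemma \ref{lm: VerduHan Lemma} gives exactly (\ref{eq: skjbvkjbh}).

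There is no real obstacle in this argument; the only subtlety worth flagging explicitly is the tacit assumption $R>\Theta$ (which is consistent with the mismatched ratio-list-decoding framework, since $L_n=e^{n\Theta}$ with $\Theta\geq R$ would trivialize the problem). It is also worth emphasizing in the writeup that the bound has the same structural form as the consequence of Fano's inequality in the matched case: the role of the matched mutual-information rate $C$ is played by the expected \emph{mismatched information density} $\mathbb{E}\{-\tfrac{1}{n}\log\Phi_{q_n}(P^{(n)})\}$, and Markov's inequality plays the role that Fano's inequality plays in the matched derivation.
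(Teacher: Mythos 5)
Your proposal is correct and is essentially identical to the paper's own proof: the paper likewise starts from the identity $1-P_e^{\Theta}(W^{(n)},\calC_n,q_n)=\Pr\{-\tfrac{1}{n}\log\Phi_{q_n}(P^{(n)})\geq R-\Theta\}$ (the content of Lemma~\ref{lm: VerduHan Lemma} specialized to $r(M_n,n)=M_ne^{-n\Theta}$) and then bounds this probability by $\tfrac{1}{R-\Theta}\mathbb{E}\{-\tfrac{1}{n}\log\Phi_{q_n}(P^{(n)})\}$, which is exactly Markov's inequality written out as an explicit sum over codewords and outputs. Your remarks on the nonnegativity of the information-density random variable and the implicit assumption $R>\Theta$ are accurate and consistent with the paper's setting.
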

Before we present the proof of Theorem \ref{th: Fano Theorem} some comments are in order. 
Theorem \ref{th: Fano Theorem} can be regarded as an extension of (\ref{eq: liming Fano}) to the mismatched case with list decoding. To see this, note that by substituting $\Theta=0$  in (\ref{eq: skjbvkjbh}) we obtain
\begin{flalign}\label{eq: jksvjdfn}
&\liminf_{n\rightarrow \infty}{\cal E}_{q_n}^{(n)}(R)\geq 1-\limsup_{n\rightarrow \infty}\sup_{P^{(n)}} \frac{1}{R}\mathbb{E}\left\{-\frac{1}{n}\log \Phi_{q_n}(P^{(n)}) \right\},\nonumber
\end{flalign}
and $\limsup_{n\rightarrow \infty}\sup_{P^{(n)}} \mathbb{E}\left\{-\frac{1}{n}\log \Phi_{q_n}(P^{(n)}) \right\}$ coincides with the mismatch capacity if the channel satisfies the strong converse property (see  \cite{SomekhBaruch_general_formula_IT2015}), e.g., a DMC with a matched metric.

Note the following corollary
\begin{corollary}\label{cr: trivial corollary}
For every channel $\bW$ and metrics sequence $\bq$,
\begin{flalign}
C_{\bq}(\bW)\leq &\liminf_{n\rightarrow\infty}\sup_{P^{(n)}\in\calP(\calX^n)}\mathbb{E}\left\{-\frac{1}{n}\log \Phi_{q_n}(P^{(n)}) \right\},\nonumber
\end{flalign}
where the supremum can be taken over distributions which are uniform over a subset of $\calX^n$.
\end{corollary}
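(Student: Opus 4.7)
The plan is to derive Corollary \ref{cr: trivial corollary} as an immediate consequence of Theorem \ref{th: Fano Theorem} specialized to $\Theta=0$. The key preliminary observation is that when $\Theta=0$, the list size $e^{n\Theta}$ equals one, and the quantity $P_e^{\Theta}(W^{(n)},\calC_n,q_n)$ reduces to the ordinary mismatched decoding average error probability: the event $|\{\bx'\in\calC_n:q_n(\bx',Y^n)\geq q_n(X^n,Y^n)\}|>1$ is precisely the event that some codeword other than the transmitted one attains a metric value at least as high as the transmitted one, which by the tie-breaking convention of (\ref{eq: mismatched decoder classical output}) is the error event of the mismatched decoder.

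With this identification in hand, fix any achievable rate $R$ for $\bW$ with the metric sequence $\bq$. By definition of $C_{\bq}(\bW)$ there exists a sequence of codebooks $\calC_n$ with $|\calC_n|=e^{nR}$ and $P_e^{0}(W^{(n)},\calC_n,q_n)\rightarrow 0$. Applying Theorem \ref{th: Fano Theorem} with $\Theta=0$, and taking $P^{(n)}$ to be the uniform distribution over $\calC_n$, yields
\begin{flalign*}
P_e^{0}(W^{(n)},\calC_n,q_n)\geq 1-\frac{1}{R}\,\mathbb{E}\left\{-\frac{1}{n}\log \Phi_{q_n}(P^{(n)})\right\}.
\end{flalign*}
Passing to $\liminf$ on both sides (the left-hand side vanishes) produces $\liminf_{n\to\infty}\mathbb{E}\{-\tfrac{1}{n}\log \Phi_{q_n}(P^{(n)})\}\geq R$. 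Since this particular $P^{(n)}$ is uniform over a subset of $\calX^n$ of cardinality $e^{nR}$, it is a feasible choice in the supremum appearing on the right-hand side of the corollary, so
\begin{flalign*}
\liminf_{n\to\infty}\sup_{P^{(n)}\in\calP(\calX^n)}\mathbb{E}\left\{-\frac{1}{n}\log \Phi_{q_n}(P^{(n)})\right\}\geq R.
\end{flalign*}
Taking the supremum over all achievable rates $R$ then gives the claimed inequality.

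There is no genuine obstacle in the argument; it is essentially a one-line specialization plus a limit. The only point worth making precise is the reduction from ``$R$ is achievable for the mismatch capacity'' to ``$P_e^{0}$ vanishes'', which requires matching the tie-breaking rule of the argmax decoder in (\ref{eq: mismatched decoder classical output}) against the strict-inequality count that defines $P_{e,r}$ at $\Theta=0$; once that is observed, the Fano-type bound of Theorem \ref{th: Fano Theorem} delivers the corollary directly. The restriction of the supremum to distributions uniform over subsets of $\calX^n$ is automatic, since the uniform distribution over $\calC_n$ is already of this form.
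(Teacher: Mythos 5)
Your proof is correct and follows essentially the same route as the paper: the corollary is obtained by specializing Theorem \ref{th: Fano Theorem} to $\Theta=0$, identifying $P_e^{0}$ with the ordinary mismatched-decoding error probability, noting that the uniform distribution over a capacity-approaching codebook is admissible in the supremum, and letting $n\to\infty$. The only point you elaborate beyond the paper---the tie-breaking match between the argmax decoder of (\ref{eq: mismatched decoder classical output}) and the event $|\{\bx'\in\calC_n:\,q_n(\bx',Y^n)\geq q_n(X^n,Y^n)\}|>1$---is a harmless clarification.
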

Corollary \ref{cr: trivial corollary} was derived in \cite{SomekhBaruch_general_formula_IT2015} using a different line of proof. 
As a special case, consider the DMC $W$ with an erasures-only metric (without list decoding); i.e., $q_{eo}(x^n,y^n)=1\{W(y^n|x^n)>0\}$. It was proved in \cite[Theorem 3]{BunteLapidothSamorodintskyIT2014} that the erasures-only capacity, $C_{q_{eo}}(W)$ satisfies
 $C_{q_{eo}}(W)
=  \lim_{n\rightarrow\infty}  \max_{P^{(n)}} -\frac{1}{n} \mathbb{E}\left(
\log\Phi_{q_{eo}}(X^n,Y^n) \right)$.
 Combining this with Theorem \ref{th: Fano Theorem} we obtain the following corollary which yields, as a special case, a lower bound on the average probability of error at rates above capacity. 
\begin{corollary}\label{cr: trivial corollary eo}
The erasures-only capacity of the DMC satisfies
\begin{flalign}
\calE_{eo}(R,\Theta) \geq 1-\frac{C_{q_{eo}}(W)}{R-\Theta},
\end{flalign}
where $\calE_{eo}(R,\Theta)=\liminf_{n\rightarrow \infty}  \calE_{q_{eo}}^{(n)}(R,\Theta)$.
\end{corollary}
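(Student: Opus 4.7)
The proof is a direct consequence of Theorem \ref{th: Fano Theorem} specialized to $q_n=q_{eo}$ combined with the single-letter-limit characterization of the erasures-only capacity due to \cite[Theorem 3]{BunteLapidothSamorodintskyIT2014}. The plan is as follows.

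First, fix an arbitrary codebook $\calC_n\in\calG_n(R)$ and apply Theorem \ref{th: Fano Theorem} to the erasures-only metric $q_{eo}$. This yields
\begin{flalign*}
P_e^{\Theta}(W^{(n)},\calC_n,q_{eo})\geq 1-\frac{1}{R-\Theta}\,\mathbb{E}\!\left\{-\tfrac{1}{n}\log\Phi_{q_{eo}}(P^{(n)})\right\},
\end{flalign*}
where $P^{(n)}$ denotes the uniform distribution over $\calC_n$.

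Second, take the infimum of both sides over $\calC_n\in\calG_n(R)$. By definition of $\calE_{q_{eo}}^{(n)}(R,\Theta)$, this produces
\begin{flalign*}
\calE_{q_{eo}}^{(n)}(R,\Theta)\geq 1-\frac{1}{R-\Theta}\sup_{\calC_n\in\calG_n(R)}\mathbb{E}\!\left\{-\tfrac{1}{n}\log\Phi_{q_{eo}}(P^{(n)})\right\}.
\end{flalign*}
Since each uniform-on-codebook distribution is a particular choice of $P^{(n)}\in\calP(\calX^n)$, the inner supremum is upper bounded by $\max_{P^{(n)}\in\calP(\calX^n)}\mathbb{E}\!\left\{-\tfrac{1}{n}\log\Phi_{q_{eo}}(P^{(n)})\right\}$, which yields a weaker yet more convenient lower bound on $\calE_{q_{eo}}^{(n)}(R,\Theta)$.

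Third, take $\liminf_{n\to\infty}$ of both sides. The left-hand side becomes $\calE_{eo}(R,\Theta)$ by definition, while for the right-hand side one uses the elementary inequality $\liminf_n(1-a_n)=1-\limsup_n a_n$ to bring the $n$-limit inside. Finally, invoke \cite[Theorem 3]{BunteLapidothSamorodintskyIT2014} (quoted just before the corollary), which asserts
\begin{flalign*}
C_{q_{eo}}(W)=\lim_{n\rightarrow\infty}\max_{P^{(n)}}-\tfrac{1}{n}\mathbb{E}\!\left(\log\Phi_{q_{eo}}(X^n,Y^n)\right),
\end{flalign*}
so that the $\limsup$ coincides with an honest limit equal to $C_{q_{eo}}(W)$. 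Combining these observations delivers the claimed inequality $\calE_{eo}(R,\Theta)\geq 1-C_{q_{eo}}(W)/(R-\Theta)$.

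No real obstacle is anticipated: the content is in Theorem \ref{th: Fano Theorem} and in the Bunte--Lapidoth--Samorodnitsky single-letter characterization, both of which are already available. The only point requiring mild care is the direction of the sup/inf when passing from the codebook-level bound to an expression involving the capacity, and the fact that the sup over uniform-on-$\calC_n$ distributions of rate $R$ is dominated by the unrestricted $n$-dimensional maximum, which is exactly the quantity whose limit equals $C_{q_{eo}}(W)$.
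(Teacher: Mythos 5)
Your proposal is correct and follows essentially the same route as the paper, which simply states that the corollary follows by ``combining'' Theorem \ref{th: Fano Theorem} with the Bunte--Lapidoth--Samorodnitsky characterization of $C_{q_{eo}}(W)$; your write-up merely makes explicit the inf-over-codebooks step, the domination of the uniform-on-codebook supremum by the unrestricted maximum over $P^{(n)}$, and the passage to the limit. No gaps.
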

We continue with the proof of Theorem \ref{th: Fano Theorem}. 
\begin{proof}
Let a codebook $\calC_n=\left\{\bx_m\right\}_{m=1}^{M_n}$ of size $M_n=e^{nR}$ be given and let $X^n$ be distributed uniformly over $\calC_n$. 
We get that
\begin{flalign}
&1-P_e^{\Theta}(W^{(n)},\calC_n,q_n)\nonumber\\
&=\mathbb{E}\left(\mbox{Pr}\left\{\left.  \sum_{\bx':q_n(\bx',Y^n)\geq q_n(X^n,Y^n)}P^{(n)}(\bx') \leq\frac{e^{n\Theta}}{M_n}\right |X^n\right\}\right)\nonumber\\
&=\mathbb{E}\left(\mbox{Pr} \left\{\left.
\frac{1}{n}\log \Phi_{q_n}(q_n(X^n,Y^n),P^{(n)},Y^n) \leq \Theta-R\right |X^n\right\}\right)\nonumber\\
&\leq\frac{1}{M_n}\sum_{m=1}^{M_n}\sum_{\by}W^{(n)}(\by|\bx_m)\frac{-\frac{1}{n}\log \Phi_{q_n}(q_n(\bx_m,\by),P^{(n)},\by)}{R-\Theta}\nonumber\\
&= \frac{1}{R-\Theta}\mathbb{E}\left\{-\frac{1}{n}\log \Phi_{q_n}(q_n(X^n,Y^n),P^{(n)},Y^n) \right\},
\end{flalign}
and this concludes the proof of Theorem \ref{th: Fano Theorem}.  
\end{proof}

\section{Summary and Discussion}\label{sc: discussion}
In this paper, the notion of ratio list decoding was introduced as a generalization of list decoding to the case of a list size that is specified as a function of both the number of messages and the block length. 
For certain choices of these functions, the number of messages in reliable ratio list decoding can grow faster than exponentially with the block length. 
For example, for $r(M_n,n)=\log(M_n)$, the number of reliably transmitted codewords with list decoding can grow up to double-exponentially with $n$; that is, the supremum of the normalized iterated logarithm of $r(M_n,n)$, $\frac{1}{n}\log \log M_n$, is equal to the Shannon capacity. 
We treated the general $r(M_n,n)$ case, including cases such as $\log(M_n), N^{\alpha}, \alpha\in[0,1]$, etc. 
This is particularly relevant for applications that can tolerate a small ratio of codebook size to list size.

Furthermore, we distinguished between feasible and non-feasible sequences $\{M_i\}_{i=1}^{\infty}$ w.r.t.\ a rate function $r(M_n,n)$, where $M_n$ is the number of messages as a function of the block length.  We showed that for every channel, if $\rho_r\triangleq \limsup_{n\rightarrow\infty}\frac{1}{n}\log r(M_n,n)>C$ reliable list decoding cannot occur, and if $\rho_r\in(0,C)$, there exists a sequence of codes having $M_n$ messages for block length $n$ which enables reliable list decoding w.r.t.\ the ratio function $r(M_n,n)$. 

The supremum of achievable normalized codebook to list size log ratios $\rho_{sup}(\bW)$ was characterized as follows:
(a) in the case of a general channel with stochastic or deterministic encoding - it was shown to be equal to the Shannon channel capacity; 
(b) in the case of a general channel with deterministic encoding and a list which is determined by a fixed metric, it was shown to be equal to the mismatch capacity. 
In either case, the quantity $\rho_{sup}(\bW)$ is therefore equal to the supremum of the bits of information per channel use that can be transmitted reliably over the channel $\bW$.

\section{Acknowlegdements}
The author would like to thank Igal Sason and the anonymous reviewers for valuable comments which improved the quality of the paper.


\end{document}